\newtheorem{theorem}{Theorem}[section]
\newtheorem{lemma}{Lemma}[section]
\newtheorem{proposition}{Proposition}[section]
\title{Infinite Chain of Harmonic Oscillators Under the Action of the Stationary Stochastic Force}
\author{A.A.~Lykov\thanks{Mechanics and Mathematics Faculty, Lomonosov Moscow
  State University, Leninskie Gory~1, Moscow, 119991, Russia} \and M.V.~Melikian\footnotemark[1]}
\begin{document}
\maketitle

\begin{abstract}
We consider countable system of harmonic oscillators on the real line with quadratic interaction potential with finite support and local external force (stationary stochastic process) acting only on one fixed particle. In the case of positive definite potential and initial conditions lying in $l_2(\mathbb{Z})$-space the perpesentation of the deviations of the particles from their equilibrium points are found. Precisely, deviation of each particle could be represented as the sum of some stationary process (it is also time limiting process in distribution for that function) and the process which converges to zero as $t\rightarrow+\infty$ with probability one. The time limit for the mean energy of the whole system is found as well.
\end{abstract}

\allowdisplaybreaks 
\section{Introduction}

Systems of coupled harmonic oscillators and their generalizations are
the classical object of study in mathematical physics, generally speaking,
and in physics itself. The point is that in physics the harmonic oscillator model
plays an important role, especially in the study of small oscillations of systems
near stable equilibrium position. Vibrations of a load attached
to the spring (in a horizontal position) could serve as an example of such fluctuations in
classical mechanics, in quantum mechanics it can be vibrations of atoms in solids, molecules, etc. Existence of
solutions and their ergodic properties have been studied in \cite {LanfordLebowitz}.
There has also been extensive research on the convergence to equilibrium of the 
countable harmonic chain in contact with the thermostat
\cite {BoldrighiniPellegrinottiTriolo, DudKomechSpohn, Bogolyubov, SpohnLebowitz}.

In physics methods of study of many systems are of probabilistic nature:
typically, researchers describe how the movement of particles in a system affects
on the average picture of the behavior of the system as a whole.

Here one cannot fail to note the series of works \cite {DobrushinFritz, BoldrighiniPellegrinottiTriolo, BoldrighiniDobrushinSukhov, DobrushinPellegrinottiSuhovTriolo},
in which chains of harmonic oscillators with different
random initial conditions (this group of authors has more
early works, however, the above are seen as basic). Interest
to such models has not faded away so far, see, for example, \cite {BernardinHuveneersOlla}.
Here we would also like to note the works of Dudnikova T.V., for example, \cite {Dudnikova_2component, Dud2, DudKomechSpohn},
where the author studies the behavior of solutions at large times, deduces
variance estimates for them, proves the existence of wave operators
or the convergence of the distributions of solutions to some limiting measure.

Randomness can be introduced into the model in other ways. For instance,
in a number of works, where the heat flux in a finite disordered
chain of oscillators, the masses of particles can be assumed to be random.
Dyson \cite {Dyson} was the first to consider such a model. Later results
were obtained by Matsud and Ishii \cite {MatsudaIshii}, Leibovitz (see.
\cite {O'ConnorLebowitz, CasherLebowitz}). There are also models where
external influence is stochastic (see the article of Lykov A.A. \cite {Lykov_energy_growth}).

We would like to note the series of works of physicists (performed rather in mathematical
spirit) \cite {KuzkinKrivtsov_Energy, Krivtsov_Ballistic, Krivtsov_Babenkov_PM_2020, Gavrilov_CMT_2020},
where the authors investigate the propagation of heat along an infinite chain of
harmonic oscillators at the micro level to obtain a connection between
micro and macro descriptions (see also links inside), as well articles
\cite {Hemmen, Fox, FlorencioLee}.

\section{Model and main results}

In this paper, we consider large countable systems. The interaction between particles we consider only in the context of
Newton's classical mechanics (see \cite{LykovMalyshev_Nbody,LM_weak,LMChub,LM_mystery}).

More precisely, we consider a countable system of point particles with unit
masses on the real line $\mathbb{R}$ with coordinates $\{x_{k}\}_{k\in\mathbb{Z}}$
and velocities $\{v_{k}\}_{k\in\mathbb{Z}}$.
We define the formal Hamiltonian (total energy of the system) by
formula:
\[
H(x(t),v(t))=\sum_{k\in\mathbb{Z}}\frac{v_{k}^{2}}{2}+\sum_{k\in\mathbb{Z}}\frac{a_{kk}}{2}(x_{k}(t)-ka)^{2}+\sum_{\begin{array}{c}
k,j\in\mathbb{Z}\\
k\neq j
\end{array}}\frac{a_{kj}}{2}(x_{k}(t)-x_{j}(t)-(k-j)a)^{2},
\]
where parameters $a>0,\:a_{kk}\geqslant0$, and $(V)_{kj}=a_{kj}$ -- linear
operator in some linear space (conditions
will be discussed below). We call Hamiltonian ``formal''
due to the fact that, generally speaking, the question arises here
on the convergence of the series involved in its definition. In this case, the first sum
corresponds to the kinetic energy of the system and the remaining -- to the potential.
Namely, the second sum in the Hamiltonian means that the particle with number
$k$, where $a_{kk}>0$, is a harmonic oscillator (oscillation
occurs near the position $ka$), the last sum is responsible for
interaction between particles with numbers $k$ and $j$, where $a_{kj}\neq0$,
where, depending on the sign of $a_{kj}(x_{k}(t)-x_{j}(t)-(k-j)a)$,
there is attraction or repulsion between the corresponding particles.
The distance $(k-j)a$ here is the distance to which these
particles ``tend to''. Particle dynamics
is determined by the countable ODE system:
\begin{align}
\ddot{x}_{k}(t) & =-\frac{\partial H}{\partial x_{k}}=-a_{kk}(x_{k}(t)-ka)+\sum_{\begin{array}{c}
j\in\mathbb{Z}\\
j\neq k
\end{array}}a_{kj}(x_{k}(t)-x_{j}(t)-(k-j)a),\quad k\in\mathbb{Z},\label{mainEqX}
\end{align}
with initial conditions $x_{k}(0),v_{k}(0)$. Equilibrium position
(i.e. the particle configurance with minimum of energy) will be:
\[
x_{k}=ka,\quad v_{k}=0,\quad k\in\mathbb{Z}.
\]
This means that if particles at the initial moment are in this configuration
then the particles will not move at all, i.e. we will have $x_{k}(t)=ka,\ v_{k}(t)=0$
for all $t\geqslant0$. In this case, it will be convenient to move on to new
variables -- deviations:
\[
q_{k}(t)=x_{k}-ka,\quad p_{k}(t)=\dot{q}_{k}(t)=v_{k}(t).
\]

It is easy to see that new variables $q_{k}(t)$ satisfy the following
ODE system:
\begin{equation}
\ddot{q}_{k}=-a_{kk}q_{k}+\sum_{\begin{array}{c}
j\in\mathbb{Z}\\
j\neq k
\end{array}}a_{k,j}(q_{k}(t)-q_{j}(t)),\quad k\in\mathbb{Z}.\label{mainEqForQ}
\end{equation}

Therefore, we will further describe the systems of particles on a straight line by introducing the Hamiltonian depending  immediately on the deviations. Let us introduce the notation:
$$q(t)=\{q_j(t)\}_{j\in\mathbb{Z}}, \; p(t)=\{p_j(t)\}_{j\in\mathbb{Z}}.$$
Now consider a countable system of harmonic oscillators on a real line with
formal Hamiltonian:
\[
H(q(t),p(t))=\frac{1}{2}\sum_{k\in\mathbb{Z}}p_{k}^{2}(t)+\frac{1}{2}\sum_{k,j\in\mathbb{Z}}a(k-j)q_{k}(t)q_{j}(t),
\]
where $q_{j}(t),\,p_{j}(t)=\dot{q}_{j}(t)\in\mathbb{R}$ --- are the deviation
and the particle momentum (of particle number $j$) respectively, and the real-valued  function $a(k)$ satisfies three
conditions:
\begin{enumerate}
\item symmetry: $a(k)=a(-k)$; 
\item bounded support, i.e.: there exists $K\in\mathbb{N}$ such that for all $|k|>K$ holds $a(k)=0$; 
\item for all $\lambda\in\mathbb{R}$ holds: 
\begin{equation}
\omega^{2}(\lambda)=\sum_{k\in\mathbb{Z}}a(k)e^{ik\lambda}>0.\label{omega_2}
\end{equation}
\end{enumerate}

We assume also that the initial conditions  $\{q_{j}(0)\}_{j},\{p_{j}(0)\}_{j}$ lie in Hilbert space $L$: 
\[
L=\{\psi=(q,p):\ q\in l_{2}(\mathbb{Z}),\ p\in l_{2}(\mathbb{Z})\}.
\]

Let $V$ be the linear operator over $\mathbb{Z}$ corresponding to
$\{a(k)\}$, (i.e. $V_{k,j}=a(k-j)=a(j-k)$).

For all $\psi\in L$ we have the Fourier transform $\widehat{\psi}(\lambda)\in L^{2}([0;2\pi])$,
where $\widehat{\psi}(\lambda)=(\widehat{q}(\lambda),\widehat{p}(\lambda))$,
$\widehat{q}(\lambda)=\sum_{k\in\mathbb{Z}}q_{k}e^{ik\lambda}$. Then
$\widehat{Vq}(\lambda)=\omega^{2}(\lambda)\widehat{q}(\lambda)$.
Indeed,
\[
\widehat{Vq}(\lambda)=\sum_{j\in\mathbb{Z}}\sum_{k\in\mathbb{Z},\,|k-j|\leqslant K}a(k-j)q_{k}e^{ij\lambda}=\sum_{k\in\mathbb{Z}}\sum_{l\in\mathbb{Z},\,|l|\leqslant K}a(l)e^{i(k-l)\lambda}q_{k}=
\]
\[
=\sum_{k\in\mathbb{Z}}(\sum_{l\in\mathbb{Z},\,|l|\leqslant K}a(-l)e^{-il\lambda})e^{ik\lambda}q_{k}=\sum_{k\in\mathbb{Z}}(\sum_{l\in\mathbb{Z},\,|l|\leqslant K}a(l)e^{il\lambda})e^{ik\lambda}q_{k}=\omega^{2}(\lambda)\widehat{q}(\lambda).
\]
Further, the operator $V$ is positive definite:
\[
\sum_{k,j\in\mathbb{Z}}a(k-j)q_{k}q_{j}=(q,Vq)=(\widehat{q},\widehat{Vq})=(\widehat {q}(\lambda),\omega^{2}(\lambda)\widehat{q}(\lambda))=\omega^{2}(\lambda)(\widehat{q}(\lambda), \widehat{q}(\lambda))>0,
\]
for $q\neq0$ due to (\ref{omega_2}). Here in the second equality is used
that if $x,y\in l_{2}$ then $(x,y)=(\widehat{x},\widehat{y})$, where
$\widehat{x}$ is the Fourier transform of the element $x$, i.e. that
$(q_{1},q_{2})_{l_{2}(\mathbb{Z})}=\frac{1}{2\pi}\int_{0}^{2\pi} \widehat{q_{ 1}}(\lambda)\overline{\widehat{q_{2}}}(\lambda)d\lambda$.

Suppose, moreover, that on particle with a fixed number $n\in\mathbb{Z}$
external force $f(t)$ acts. Then the motion of the system is described by
the following infinite ODE system:
\begin{equation}
\ddot{q}_{j}=-\sum_{k}a(k-j)q_{k}+f(t)\delta_{j,n},\;j\in\mathbb{Z},\label{coordEq}
\end{equation}
where $\delta_{j,n}$ is the Kronecker symbol. We assume that $f(t)$ is a stochastic process satisfying the following condition:

A1) real-valued centered second-order stationary process
with continuous covariance function (see \cite{GrimmettStirzaker}, p. 361).

We say that sequences of stochastic processes $\{q_{k}(t)\}_{k\in\mathbb{Z}},\:\{p_{k}(t)\}_{k\in\mathbb{Z}}$
solve the system of equations (\ref{coordEq}) if they are continuously differentiable
in mean square and when substituted to (\ref{coordEq}), the right and left sides are equal
almost surely. More precisely, for any $j\in\mathbb{Z}$ and any $t\geqslant0$
equalities 
\[
\dot{q}_{j}=p_{j},
\]
\[
\dot{p}_{j}=-\sum_{k}a(k-j)q_{k}+f(t)\delta_{j,n},
\]
hold with probability one.

The following lemma on the existence and uniqueness of a solution of main system (\ref{coordEq}) holds.

\begin{proposition} \label{2_PropExistSol} Let the condition 
A1) be satisfied. Then for all $\psi(0)\in L$ there exists and is unique solution $\psi(t)=(q(t),p(t))$ of the system (\ref{coordEq}) with initial condition $\psi(0)$ such that $P(\psi(t)\in L)=1$ for all $t\geqslant0$. \end{proposition}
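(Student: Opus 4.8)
The plan is to rewrite the second-order system (\ref{coordEq}) as a first-order linear evolution equation in the Hilbert space $L$ with a \emph{bounded} generator, and then to solve it by the variation-of-constants (Duhamel) formula, treating the stochastic forcing as a mean-square-continuous inhomogeneity. First I would set $\psi(t)=(q(t),p(t))$ and write
\[
\dot\psi(t)=A\psi(t)+F(t),\qquad A=\begin{pmatrix}0 & I\\ -V & 0\end{pmatrix},\quad F(t)=f(t)\,b,
\]
where $b=(0,\{\delta_{j,n}\}_{j})\in L$ is the fixed unit vector supported on the $p$-component at site $n$. The crucial structural observation is that $A$ is bounded on $L$: since $a$ has finite support, $V$ is the convolution operator with symbol $\omega^2(\lambda)$, a trigonometric polynomial, so $\|V\|=\max_\lambda\omega^2(\lambda)<\infty$ and hence $\|A\|\le 1+\|V\|$. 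Consequently $e^{tA}=\sum_{m\ge0}t^mA^m/m!$ converges in operator norm, defines a uniformly continuous group with $\|e^{tA}\|\le e^{\|A\|t}$, and the candidate solution is
\[
\psi(t)=e^{tA}\psi(0)+\int_0^t e^{(t-s)A}b\,f(s)\,ds.
\]

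The main work \tire and the step where assumption A1 is essential \tire is to make sense of the stochastic convolution $I(t)=\int_0^t g(t-s)f(s)\,ds$, with deterministic kernel $g(\tau)=e^{\tau A}b\in L$, as an $L$-valued random variable and to show that $I(t)\in L$ almost surely. Here I would use that A1 forces $f$ to be mean-square continuous (continuity of the covariance at the origin gives $E|f(t+h)-f(t)|^2\to0$ as $h\to0$), while $\tau\mapsto g(\tau)$ is continuous and bounded in $L$-norm on $[0,t]$ by $e^{\|A\|t}$. I would define $I(t)$ as the mean-square limit of Riemann sums $\sum_i g(t-s_i)f(s_i)\Delta_i$, each of which is a genuine $L$-valued random element; their Cauchy property in $L^2(\Omega;L)$ together with the identity
\[
E\|I(t)\|_L^2=\int_0^t\!\!\int_0^t\big(g(t-s),g(t-s')\big)_L\,R(s-s')\,ds\,ds'<\infty
\]
follow from the boundedness and continuity of $g$ and of the covariance $R$ of $f$ on the compact square $[0,t]^2$. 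Finiteness of this second moment yields $\|I(t)\|_L<\infty$ a.s., hence $P(\psi(t)\in L)=1$.

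It then remains to verify that this $\psi$ solves the system in the stated sense and is unique. Differentiating the Duhamel formula \tire legitimate because $A$ is bounded and both summands are mean-square differentiable, with $\frac{d}{dt}I(t)=b\,f(t)+AI(t)$ \tire gives $\dot\psi(t)=A\psi(t)+F(t)$ in mean square, so $\psi$ is mean-square $C^1$; reading off components recovers $\dot q_j=p_j$ and $\dot p_j=-(Vq)_j+f(t)\delta_{j,n}$, and mean-square equality of the two sides forces almost-sure equality for each $j$ and $t$. For uniqueness I would take the difference $\phi=\psi_1-\psi_2$ of two solutions with the same initial data; it satisfies the homogeneous equation $\dot\phi=A\phi$ with $\phi(0)=0$, so the mean-square product rule gives $\frac{d}{dt}E\|\phi(t)\|_L^2=2E(\phi,A\phi)_L\le 2\|A\|\,E\|\phi(t)\|_L^2$, and Grönwall's inequality with vanishing initial value forces $E\|\phi(t)\|_L^2\equiv0$, i.e.\ $\phi(t)=0$ a.s. I expect the genuinely delicate point to be the $L$-valued mean-square integration and the almost-sure membership $I(t)\in L$; once the boundedness of $A$ and the mean-square continuity of $f$ are established, existence, verification, and uniqueness are routine linear-ODE-in-Hilbert-space arguments.
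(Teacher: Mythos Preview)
Your proof is correct and follows the same overall framework as the paper: rewrite (\ref{coordEq}) as a first-order evolution equation $\dot\psi=A\psi+f(t)g$ in $L$, note that the finite support of $a$ makes $A$ bounded so that $e^{tA}$ is a norm-convergent series, and solve via the Duhamel formula $\psi(t)=e^{tA}\psi(0)+\int_0^t e^{(t-s)A}g\,f(s)\,ds$. Uniqueness is handled the same way in both: the difference of two solutions solves the homogeneous problem with zero data and hence vanishes by the bounded-generator ODE theory (your Gr\"onwall argument is just the explicit version of this).

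The one substantive difference is in how you show $P(\psi_1(t)\in L)=1$ for the inhomogeneous part $\psi_1(t)=\int_0^t e^{(t-s)A}g\,f(s)\,ds$. You bound $E\|\psi_1(t)\|_L^2$ globally via the operator-norm estimate $\|e^{\tau A}g\|_L\le e^{\|A\|\tau}$, obtaining
\[
E\|\psi_1(t)\|_L^2=\int_0^t\!\!\int_0^t(e^{(t-s)A}g,e^{(t-s')A}g)_L\,B(s-s')\,ds\,ds'\le t^2\,e^{2\|A\|t}\sup_{[-t,t]}|B|<\infty,
\]
and conclude $\|\psi_1(t)\|_L<\infty$ a.s.\ directly. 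The paper instead works componentwise: it writes $e^{tA}$ explicitly through $\cos(\sqrt{V}t)$ and $(\sqrt{V})^{-1}\sin(\sqrt{V}t)$ (Lemma~\ref{2_lemmaEXP}), invokes a separate decay estimate for their matrix entries $C_{k,n}(t),S_{k,n}(t)$ in $|k-n|$ (Lemma~\ref{2_lemmaBound_C_S}, quoted from \cite{Lykov_energy_growth}), bounds each $E|q_k^{(1)}(t)|^2$ individually, sums over $k$, and then applies monotone convergence. Your argument is shorter and avoids the auxiliary matrix-element lemma entirely; the paper's componentwise route yields finer information (explicit decay in the distance to the forced site), but that extra precision is not actually needed for this proposition.
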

Uniqueness here means that if there is another solution
$\varphi(t)$ of the system (\ref{coordEq}) with inital condition $\psi$ such that $P(\varphi(t)\in L)=1$ for all $t\geqslant0$ then $\psi(t)$ and $\varphi(t)$ are stochastically equivalent, i.e. $P(\psi(t)=\varphi(t))=1$
for all $t\geqslant0$.

We are interested in the question of how the solution $\psi(t)$ and the average
energy of the system $\bold{E}H(\psi(t))$ as $t\rightarrow\infty$ behave. Before formulating the main results, we introduce one more condition on the external force.

Consider the set $E=\{\omega^{2}(\lambda):\ \lambda\in\mathbb{R}\}$ ($\omega^{2}(\lambda)$ is defined in (\ref{omega_2}))
--- range of the function $\omega^{2}(\lambda)$ (spectral
set of our system). Since $\omega^{2}(\lambda)$ is trigonometric
polynomial, then $E$ is a segment $[e_{1};e_{2}]$ of the real line.
Denote by $\mu(dx)$ the spectral measure of the process $f(t)$ and introduce the condition:

A2) the support of the spectral measure $\mu$ is isolated from the plus or minus ``of the root''
of the set $E$, i.e. there is an open set $U$ containing
$\pm[\sqrt{e_{1}},\sqrt{e_{2}}]$ such that $\mu(U)=0$.


\begin{theorem} \label{2_Theorem1}
Consider conditions A1) and A2) and $\psi(0)=(q(0),p(0))\in L$ hold. Then there is
random process $\eta(t)=(q^{\infty}(t),p^{\infty}(t))$ such that the following conditions hold:

\begin{enumerate} 

\item  $\eta(t)$ is a solution to the system (\ref{coordEq}) with some
initial conditions;

\item  the difference $\psi(t)-\eta(t)$ converges to zero as
$t\rightarrow+\infty$ component-by-component with probability one, and the trajectories of the process are continuous and infinitely differentiable a.s.;

\item  each component of $\eta(t)$ is a stationary process,
satisfying condition A1) and $P(\eta(t)\in L)=1$ for all $t\geqslant0$;

\item  there exist positive constants $c_{1},c_{2}$ and $0<r<1$
such that
\[
Dq_{k}^{\infty}(0)\leqslant c_{1}r^{|n-k|},\ Dp_{k}^{\infty}(0)\leqslant c_{2}r^{|n-k|}.
\]
\end{enumerate} \end{theorem}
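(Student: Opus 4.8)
The plan is to pass to the spatial Fourier transform on $\mathbb{Z}$ and to the spectral representation of the forcing, to construct $\eta$ as the \emph{stationary response} to $f$, and then to show that the remainder $\psi(t)-\eta(t)$ is a homogeneous solution that disperses. Write $f(t)=\int_{\mathbb{R}}e^{i\nu t}\,Z(d\nu)$ for the orthogonal stochastic spectral measure $Z$ associated with $f$ (so $\mathbf{E}|Z(d\nu)|^2=\mu(d\nu)$; this representation exists by A1, since a continuous covariance means $f$ is mean-square continuous). In Fourier variables \eqref{coordEq} reads $\ddot{\widehat q}(\lambda,t)=-\omega^2(\lambda)\widehat q(\lambda,t)+f(t)e^{in\lambda}$, a forced oscillator in each mode $\lambda$ with frequency $\omega(\lambda)$, whose stationary non-resonant response at frequency $\nu$ is $e^{in\lambda}/(\omega^2(\lambda)-\nu^2)$; this is finite precisely because A2 keeps $\nu^2$ away from $E=\{\omega^2(\lambda)\}$. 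Inverting the spatial transform I would therefore \emph{define}
\[
q^\infty_k(t)=\int_{\mathbb{R}}e^{i\nu t}G_{n-k}(\nu)\,Z(d\nu),\qquad G_m(\nu)=\frac{1}{2\pi}\int_0^{2\pi}\frac{e^{im\lambda}}{\omega^2(\lambda)-\nu^2}\,d\lambda,
\]
and $p^\infty_k=\dot q^\infty_k$. A direct substitution, using that $\sum_m G_m(\nu)e^{-im\lambda}=1/(\omega^2(\lambda)-\nu^2)$ and the elementary identity $-\omega^2/(\omega^2-\nu^2)+1=-\nu^2/(\omega^2-\nu^2)$, shows $\eta=(q^\infty,p^\infty)$ solves \eqref{coordEq} in mean square (the required one and two mean-square derivatives exist because $G_m(\nu)=O(\nu^{-2})$ for large $\nu$, making $\nu^2|G_m|^2$ and $\nu^4|G_m|^2$ $\mu$-integrable); stationarity of every component is immediate from the representation. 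This gives item~1, part of item~3, and places $\eta$ within the existence--uniqueness framework of Proposition~\ref{2_PropExistSol}.

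The heart of the argument is the exponential estimate of item~4, which I would obtain by analyticity. By A2 the compact set $\pm[\sqrt{e_1},\sqrt{e_2}]$ lies in an open set of $\mu$-measure zero, so there is $\delta>0$ with $|\omega^2(\lambda)-\nu^2|\ge\delta$ for all real $\lambda$ and all $\nu\in\operatorname{supp}\mu$. Since $\omega^2$ is a trigonometric polynomial, $\omega^2(\lambda+i\theta)-\nu^2$ stays bounded away from zero for $|\theta|$ below some $\theta_0>0$, \emph{uniformly} in $\nu\in\operatorname{supp}\mu$; shifting the contour in the definition of $G_m$ into the upper (for $m>0$) or lower (for $m<0$) half-strip and using $2\pi$-periodicity then yields $|G_m(\nu)|\le C e^{-|m|\theta_0}$ uniformly on $\operatorname{supp}\mu$. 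Consequently
\[
Dq^\infty_k(0)=\int_{\mathbb{R}}|G_{n-k}(\nu)|^2\,\mu(d\nu)\le C^2\mu(\mathbb{R})\,e^{-2|n-k|\theta_0},
\]
which is item~4 with $r=e^{-2\theta_0}$ (the momentum bound is analogous, differentiating under the spectral integral). Summing over $k$ gives $\mathbf{E}\|q^\infty(t)\|_{l_2}^2=\sum_k Dq^\infty_k(0)<\infty$, so $q^\infty(t)\in l_2$ a.s., and likewise for $p^\infty$; hence $P(\eta(t)\in L)=1$, completing item~3.

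Finally, for item~2, observe that both $\psi$ and $\eta$ solve \eqref{coordEq} with the same forcing, so $u(t):=\psi(t)-\eta(t)$ solves the \emph{homogeneous} chain $\ddot u=-Vu$ for $t\ge0$ with initial data $u(0)=\psi(0)-\eta(0)=:(\alpha,\beta)\in L$ a.s. For such data the components have the representation $u_k(t)=\frac{1}{2\pi}\int_0^{2\pi}\bigl[\cos(\omega(\lambda)t)\widehat\alpha(\lambda)+\tfrac{\sin(\omega(\lambda)t)}{\omega(\lambda)}\widehat\beta(\lambda)\bigr]e^{-ik\lambda}\,d\lambda$; since the amplitudes lie in $L^2([0,2\pi])\subset L^1([0,2\pi])$ and the phase $\omega(\lambda)$ is real-analytic, nonconstant, with only finitely many nondegenerate critical points, a stationary-phase/Riemann--Lebesgue argument gives $u_k(t)\to0$ as $t\to+\infty$ for each fixed $k$. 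Applying this for each realization of the (a.s.\ $L$-valued) random initial data and intersecting over the countably many indices $k$ yields the a.s.\ componentwise convergence; infinite differentiability of each $u_k$ in $t$ follows by differentiating under the integral sign, which is legitimate because $\omega$ and every factor $\omega(\lambda)^j$ are bounded on the compact interval $[0,2\pi]$. The main obstacle I anticipate is the \emph{uniform} (in $\nu\in\operatorname{supp}\mu$) contour shift underlying item~4 --- namely, verifying that the analyticity strip on which $\omega^2(\lambda+i\theta)-\nu^2$ stays non-vanishing can be chosen independently of $\nu$; handling a possibly unbounded $\operatorname{supp}\mu$ requires separating the large-$|\nu|$ regime, where $|\omega^2-\nu^2|\to\infty$ gives the bound for free, from a neighbourhood of $E$, where the A2-gap $\delta$ is used.
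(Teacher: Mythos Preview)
Your proposal is correct and follows essentially the same strategy as the paper: define $\eta$ via the resolvent/Green's function against the spectral measure $Z$, verify directly that it solves the system, obtain the exponential variance decay of item~4 by contour deformation (the paper passes to $z=e^{i\lambda}$ and shifts the radius of the circle of integration using that the roots of $P(z)-x^{2}$ come in inverse pairs off $|z|=1$, which is exactly your shift $\lambda\mapsto\lambda+i\theta$ in different coordinates), deduce $\eta(t)\in L$ a.s.\ by summing variances, and show the homogeneous remainder $\psi-\eta$ decays componentwise via oscillatory-integral estimates (the paper cites an Arkhipov--Karatsuba--Chubarikov van der Corput-type bound to get a $c/\sqrt{t}$ rate, where you invoke stationary phase/Riemann--Lebesgue for the qualitative limit). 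One small caveat: the critical points of $\omega$ need not be \emph{nondegenerate} as you assert, but analyticity and nonconstancy guarantee they are isolated and of finite order, which is all your decay argument actually requires.
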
 Thus, the process $\psi(t)$
is in some sense close to the stationary process $\eta(t)$. Besides this
in addition, $\eta(t)$ has ``nice'' properties. In particular,
the variance of the $\eta(0)$ components decreases exponentially with increasing distance
to the point of application of the external force.

Generally speaking, this assertion does not imply the weak
convergence of $\psi(t)$ components to the corresponding $\eta(0)$ components.
However, the following assertion can be proven:

\begin{theorem} \label{2_Theorem2} In addition to conditions A1)
and A2) assume that $f(t)$ is a strictly stationary process.
Then each component of $\psi(t)$ converges in distribution to the corresponding
component $\eta(0)$, i.e. for all $k\in\mathbb{Z}$
takes place the convergence
\[
q_{k}(t)\xrightarrow[]{d} q_{k}^{\infty}(0),\quad p_{k}(t )\xrightarrow[]{d} p_{k}^{\infty}(0)
\]
as $t\to\infty$. \end{theorem}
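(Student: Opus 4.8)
The plan is to combine the almost sure convergence furnished by Theorem~\ref{2_Theorem1} with strict stationarity of the limiting process $\eta(t)$, and then to pass to convergence in distribution by a Slutsky-type argument. Fix $k\in\mathbb{Z}$ and write the decomposition
\[
q_{k}(t)=q_{k}^{\infty}(t)+\bigl(q_{k}(t)-q_{k}^{\infty}(t)\bigr),\qquad p_{k}(t)=p_{k}^{\infty}(t)+\bigl(p_{k}(t)-p_{k}^{\infty}(t)\bigr).
\]
By item~2 of Theorem~\ref{2_Theorem1} the bracketed remainders tend to zero almost surely as $t\to+\infty$, hence also in probability. Therefore, once we know that $q_{k}^{\infty}(t)\xrightarrow[]{d}q_{k}^{\infty}(0)$ and $p_{k}^{\infty}(t)\xrightarrow[]{d}p_{k}^{\infty}(0)$, Slutsky's theorem (a summand converging in distribution plus a summand converging to the constant $0$ in probability) yields the asserted convergence $q_{k}(t)\xrightarrow[]{d}q_{k}^{\infty}(0)$ and $p_{k}(t)\xrightarrow[]{d}p_{k}^{\infty}(0)$.

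Thus the entire content reduces to the distributional statement about $\eta$ itself, and this is exactly where the extra hypothesis that $f(t)$ be strictly stationary enters. Under condition A1) alone, Theorem~\ref{2_Theorem1} only gives that each component of $\eta(t)$ is second-order (wide-sense) stationary, which is not enough to compare one-dimensional laws at different times. I would therefore first upgrade $\eta$ to a strictly stationary process. The key structural fact is that the construction of $\eta$ in the proof of Theorem~\ref{2_Theorem1} produces the stationary response to $f$ in a time-translation-covariant way: there is a fixed measurable functional such that $\eta(t)$ is obtained by applying it to the shifted trajectory $\{f(s+t)\}_{s}$, schematically $\eta(t)=\Phi\bigl(\theta_{t}f\bigr)$ with $\theta_{t}$ the time shift. (Concretely, in the spatial Fourier variable each mode solves the forced oscillator $\ddot{\widehat{q}}(\lambda,t)+\omega^{2}(\lambda)\widehat{q}(\lambda,t)=e^{in\lambda}f(t)$, whose stationary solution is the convolution of $f$ with a deterministic resonance-free kernel, the latter being well defined precisely because of A2.) Consequently, for any shift $h$ the finite-dimensional distributions of $\{\eta(t_{1}+h),\dots,\eta(t_{m}+h)\}$ coincide with those of $\{\Phi(\theta_{t_{1}}(\theta_{h}f)),\dots\}$; since strict stationarity of $f$ means $\theta_{h}f\overset{d}{=}f$ as processes, these equal the finite-dimensional distributions of $\{\eta(t_{1}),\dots,\eta(t_{m})\}$. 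Hence $\eta$ is strictly stationary, so in particular $q_{k}^{\infty}(t)\overset{d}{=}q_{k}^{\infty}(0)$ and $p_{k}^{\infty}(t)\overset{d}{=}p_{k}^{\infty}(0)$ for every $t$, which trivially supplies the convergence in distribution needed above.

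The main obstacle I anticipate is the verification that the limiting process is genuinely a shift-equivariant measurable functional of the driving path $f$, with no auxiliary randomness entering through the initial data of $\eta$; this is what transfers strict stationarity from $f$ to $\eta$. Granting that structural fact from the construction in Theorem~\ref{2_Theorem1}, the remaining ingredients\tire almost sure convergence of the remainder, strict stationarity of $\eta$, and the Slutsky argument\tire are routine, and they apply verbatim to both the coordinate and the momentum components.
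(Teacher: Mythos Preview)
Your approach is correct and matches the paper's proof: decompose $\psi_k(t)=\eta_k(t)+(\psi_k(t)-\eta_k(t))$, use the almost sure convergence of the remainder from Theorem~\ref{2_Theorem1}, the identity in law $\eta_k(t)\overset{d}{=}\eta_k(0)$, and then Slutsky. The paper packages the last two steps into a short lemma (Lemma~\ref{2_lemma_Converge}) but does not spell out why $\eta_k(t)\overset{d}{=}\eta_k(0)$ follows from strict stationarity of $f$; your shift-equivariance argument $\eta(t)=\Phi(\theta_t f)$ fills precisely that gap, and indeed the explicit formula $\eta(t)=-\int_{\mathbb{R}}e^{itx}R_A(ix)g\,Z(dx)$ from the proof of Theorem~\ref{2_Theorem1} makes this structure transparent.
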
 Let us formulate theorems on mean energy of the
system.

\begin{theorem} \label{2_Theorem3} Let conditions A1)
and A2) be satisfied then
\[
\lim_{t\rightarrow+\infty}EH(\psi(t))=\alpha+H(\psi(0)),
\]
\[
EH(\eta(0))=\alpha,
\]
where we introduced the following constant
\[
\alpha=\frac{1}{4\pi}\int_{\mathbb{R}}\int_{0}^{2\pi}\frac{\omega^{2}(\lambda)+x^{2}}{(\omega^{2}(\lambda)-x^{2})^{2}}d\lambda\mu(dx).
\]
\end{theorem}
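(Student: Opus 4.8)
The plan is to reduce everything to a single forced oscillator in Fourier variables and then to an integral over the Brillouin zone. By Parseval's identity the energy takes the form
\[
EH(\psi(t))=\frac{1}{4\pi}\int_{0}^{2\pi}E\bigl(|\widehat{p}(\lambda,t)|^{2}+\omega^{2}(\lambda)|\widehat{q}(\lambda,t)|^{2}\bigr)\,d\lambda ,
\]
and for each fixed $\lambda$ the transform of (\ref{coordEq}) is the scalar equation $\ddot{\widehat q}+\omega^{2}(\lambda)\widehat q=e^{in\lambda}f(t)$. I would split the solution as $\psi(t)=\psi^{0}(t)+\psi^{f}(t)$, where $\psi^{0}$ is the free evolution of the deterministic datum $\psi(0)$ and $\psi^{f}$ is the response starting from rest. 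Since $H$ is conserved along the free flow, $H(\psi^{0}(t))=H(\psi(0))$; and because $\psi^{0}$ is deterministic while $f$ is centred, the bilinear cross term satisfies $E\,B(\psi^{0}(t),\psi^{f}(t))=B(\psi^{0}(t),E\psi^{f}(t))=0$. Hence $EH(\psi(t))=H(\psi(0))+EH(\psi^{f}(t))$ and the whole problem reduces to the limit of $EH(\psi^{f}(t))$.

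Second, I would evaluate $EH(\psi^{f}(t))$ through the spectral representation $f(t)=\int_{\mathbb{R}}e^{ixt}Z(dx)$ with $E|Z(dx)|^{2}=\mu(dx)$. Solving the forced oscillator from rest by Duhamel gives an explicit kernel, and after taking expectations one obtains
\[
EH(\psi^{f}(t))=\frac{1}{4\pi}\int_{0}^{2\pi}\int_{\mathbb{R}}\Phi(x,\lambda,t)\,\mu(dx)\,d\lambda ,
\]
where the $t$-independent part of $\Phi$ is proportional to $\frac{\omega^{2}(\lambda)+x^{2}}{(\omega^{2}(\lambda)-x^{2})^{2}}$ and the remainder is a sum of terms oscillating at the frequencies $\omega(\lambda)\pm x$. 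Condition A2 is exactly what forces $|\omega^{2}(\lambda)-x^{2}|\geqslant c>0$ on $\operatorname{supp}\mu$, so $\Phi$ is bounded uniformly in $t,\lambda$ and dominated by a $\mu$-integrable function, which guarantees finiteness and licenses later interchanges of limit and integration. A clean equivalent route I would also record is the energy-balance identity $\frac{d}{dt}H(\psi(t))=f(t)p_{n}(t)$, giving $\lim_{t}EH(\psi(t))=H(\psi(0))+\int_{0}^{\infty}E[f(t)p_{n}(t)]\,dt$; inserting the Duhamel formula for $p_{n}$ leads to the same spectral integral and lets one read off the limiting constant from residue-type evaluations $\int_{0}^{\infty}\sin((\omega\pm x)t)\,dt\rightsquigarrow \tfrac{1}{\omega\pm x}$.

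The decisive and hardest step is to show that the oscillatory-in-$t$ contributions vanish in the limit. For fixed $\lambda$ they do \emph{not} decay, since the undamped oscillator keeps ringing, so convergence cannot hold frequency by frequency; it is integration over $\lambda\in[0,2\pi]$ that produces the effective dispersive (dephasing) decay. I would invoke the Riemann–Lebesgue lemma, refined by van der Corput estimates at the finitely many critical points of the real-analytic, non-constant phase $\omega(\lambda)$, to conclude $\int_{0}^{2\pi}g(\lambda)\cos\bigl((\omega(\lambda)\pm x)t\bigr)\,d\lambda\to 0$, and then push this through the $\mu$-integral by dominated convergence using the uniform bounds from A2. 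What survives is precisely the non-oscillatory part, and collecting it yields $\lim_{t\to+\infty}EH(\psi(t))=H(\psi(0))+\alpha$. The main obstacle lies entirely here: one must secure the $t$-uniform domination and handle the critical points of $\omega$, since without A2 the integrand would be singular at the resonance $x^{2}=\omega^{2}(\lambda)$ and the limit would diverge.

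Finally, the companion identity $EH(\eta(0))=\alpha$ is the easy part. Using the stationary representation $\widehat{q}^{\infty}(\lambda,t)=e^{in\lambda}\int_{\mathbb{R}}\frac{e^{ixt}}{\omega^{2}(\lambda)-x^{2}}Z(dx)$, which solves the forced equation precisely because A2 excludes the resonance, together with $\widehat{p}^{\infty}=\partial_{t}\widehat{q}^{\infty}$, I would compute
\[
E|\widehat{q}^{\infty}(\lambda,0)|^{2}=\int_{\mathbb{R}}\frac{\mu(dx)}{(\omega^{2}(\lambda)-x^{2})^{2}},\qquad E|\widehat{p}^{\infty}(\lambda,0)|^{2}=\int_{\mathbb{R}}\frac{x^{2}\,\mu(dx)}{(\omega^{2}(\lambda)-x^{2})^{2}},
\]
and substitute them into the Parseval form of $H$ to arrive at $EH(\eta(0))=\alpha$ directly, completing the proof.
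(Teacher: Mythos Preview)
Your proposal is correct and follows essentially the same route as the paper: the same decomposition $\psi=\psi^{0}+\psi^{f}$ (the paper writes $\epsilon+\eta$ here, reusing the symbol $\eta$ for the forced-from-rest part), the same vanishing of cross terms by centredness, the same energy conservation for the free part, the same Duhamel/spectral expression for $EH(\psi^{f}(t))$, and the same Fourier computation of $EH(\eta(0))$. The only notable difference is in the oscillatory-integral step: where you invoke Riemann--Lebesgue refined by van der Corput and then dominated convergence in $\mu$, the paper appeals to a specific trigonometric-integral estimate of Arkhipov--Karatsuba--Chubarikov to obtain a uniform-in-$x$ rate $O(t^{-1/2})$, which lets it bound the remainder by $c\sigma^{2}/(\pi\sqrt{t})$ directly rather than via dominated convergence.
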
 Thus, the time limit for the average energy of the system
generally differs from the average energy of the limiting distribution,
which does not depend on the initial energy level, however, they will
coincide in the case of zero initial conditions.

\subsection{Proofs}

To begin with, let us note that it follows from the spectral theory that for $f(t)$
-- second-order stationary centered stochastic process:
\begin{equation}
B(s)=\int_{\mathbb{R}}e^{isx}\mu(dx),\;f(s)\overset{\text{a.s.}}{=}\int_{\mathbb{R }}e^{isx}Z(dx),\label{process}
\end{equation}
where $Z(dx)$ is an orthogonal measure, $\mu(dx)$ is a spectral measure,
and $B(s)$ is the covariance function.

\subsubsection{Proof of Proposition \ref{2_PropExistSol}}

Let us denote an operator in $L$:
\[
A=\left(\begin{array}{cc}
0&E\\
-V&0
\end{array}\right).
\]
Let's rewrite the system (\ref{coordEq}) in Hamiltonian form:
\begin{equation}
\begin{cases}
\dot{q_{j}}=p_{j},\\
\dot{p_{j}}=-\sum_{k}a(k-j)q_{k}+f(t)\delta_{j,n}.
\end{cases}\label{syst_main}
\end{equation}
Let's introduce the vector $\psi(t)=\left(\begin{array}{c}
q(t)\\
p(t)
\end{array}\right)$. Then the system will be rewritten in the form:
\begin{equation}
\dot{\psi}=A\psi+f(t)g,\label{eq_main}
\end{equation}
\[
g=(0,e_{n})^{T},\:0,e_{n}\in l_{2}(\mathbb{Z}),\:e_{n}(j)=\delta_{j,n}.
\]

The uniqueness of the solution follows from the linearity of the system. Indeed,
let $\psi(t)$ and $\varphi(t)$ be solutions of the system (\ref{eq_main})
with the same initial condition. Then $\epsilon(t)=\psi(t)-\varphi(t)$
is a solution to the homogeneous equation
\begin{equation}
\dot{\epsilon}=A\epsilon,\label{eq_odnor}
\end{equation}
with zero initial condition $\epsilon(0)=0$ and, moreover, $\epsilon(t)\in L$
almost certainly for all $t\geqslant0$. Thus, similarly to the
arguments of the classical theory of ODEs in Banach spaces we have
the required statement (see \cite{Dal_Krein_Stab}).

The solution of the system (\ref{eq_main}) can be expressed with the classical
formula for solving a nonhomogeneous ODE (see \cite{Dal_Krein_Stab}):
\[
\psi(t)=e^{At}(\psi(0)+\int_{0}^{t}e^{-As}gf(s)ds)=\psi_{0}(t)+\psi_{1}(t),
\]
where
\[
\psi_{0}(t)=e^{At}\psi(0),
\]
\[
\psi_{1}(t)=\int_{0}^{t}e^{A(t-s)}gf(s)ds.
\]
Note that since $V_{i,j}=a(i-j)$ and $a(k)$ has bounded
support the operator $A$ is a bounded linear operator
on $L$, hence the operator $e^{At}$ is well-defined
bounded operator on $L$. Hence $\psi_{0}(t)\in L$ for all
$t\geqslant0$.

Next, we turn to the consideration of $\psi_{1}(t)$. For this we need
lemma \begin{lemma} \label{2_lemmaEXP} For the operator $A$ the following is true:
\begin{equation}
e^{At}=\left(\begin{array}{cc}
\cos(\sqrt{V}t) & (\sqrt{V})^{-1}\sin(\sqrt{V}t)\\
-\sqrt{V}\sin(\sqrt{V}t) & \cos(\sqrt{V}t)
\end{array}\right),\label{exp_At}
\end{equation}
where the sine and cosine of the operator are defined by the corresponding series.
\end{lemma}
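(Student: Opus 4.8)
The plan is to exhibit the right-hand side of \eqref{exp_At} as the unique solution of the operator Cauchy problem $\dot U(t)=AU(t)$, $U(0)=I$, which is precisely what $e^{At}$ is. First I would record the analytic facts that make every entry meaningful. Since $a(k)$ is real, symmetric and finitely supported, $V$ is a bounded self-adjoint operator on $l_2(\mathbb{Z})$, and by \eqref{omega_2} together with continuity of $\omega^2$ on the compact torus its spectrum lies in the segment $[e_1,e_2]$ with $e_1>0$. Hence the continuous functional calculus furnishes bounded operators $\sqrt V$ and $(\sqrt V)^{-1}$, and $\cos(\sqrt V t)$, $\sin(\sqrt V t)$ are given by power series in $\sqrt V t$ that converge in operator norm, uniformly for $t$ in compact intervals because $\|\sqrt V\|\le\sqrt{e_2}<\infty$. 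In particular $(\sqrt V)^{-1}\sin(\sqrt V t)=\sum_{k\ge0}\frac{(-1)^k t^{2k+1}}{(2k+1)!}V^{k}$ contains only nonnegative integer powers of $V$, so it is a genuine bounded operator and the factor $(\sqrt V)^{-1}$ introduces no singularity.

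Denote by $U(t)$ the matrix on the right of \eqref{exp_At}. Differentiating the series term by term (legitimate by the uniform norm convergence just noted) gives $\frac{d}{dt}\cos(\sqrt V t)=-\sqrt V\sin(\sqrt V t)$ and $\frac{d}{dt}\big[(\sqrt V)^{-1}\sin(\sqrt V t)\big]=\cos(\sqrt V t)$. A direct $2\times 2$ block multiplication then shows $\dot U(t)=AU(t)$: the upper row of $AU(t)$ reproduces the derivatives above, while the lower row uses $V(\sqrt V)^{-1}=\sqrt V$ and $\sqrt V\cdot\sqrt V=V$ to match the blocks $-V\cos(\sqrt V t)$ and $-\sqrt V\sin(\sqrt V t)$. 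Moreover $U(0)=I$ since $\cos 0=E$ and $\sin 0=0$. Because $A$ is bounded, the solution of this linear operator ODE in the Banach algebra $\mathcal{B}(L)$ is unique, whence $U(t)=e^{At}$.

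An equivalent self-contained route that avoids invoking ODE uniqueness is to compute $A^{2}=-\operatorname{diag}(V,V)$, so that $A^{2k}=(-1)^{k}\operatorname{diag}(V^{k},V^{k})$ and $A^{2k+1}=(-1)^{k}\begin{pmatrix}0 & V^{k}\\ -V^{k+1} & 0\end{pmatrix}$; substituting into $e^{At}=\sum_{n\ge0}t^{n}A^{n}/n!$ and separating even and odd powers reproduces the four blocks once one recognizes the series for $\cos(\sqrt V t)$, $(\sqrt V)^{-1}\sin(\sqrt V t)$ and $\sqrt V\sin(\sqrt V t)$. I do not expect a serious obstacle here, as the statement is essentially a functional-calculus identity; the only points requiring care are justifying the interchange of summation and differentiation with the functional calculus, which is immediate from boundedness of $V$, and confirming that the apparent pole in $(\sqrt V)^{-1}$ is spurious, which is guaranteed by $e_1>0$.
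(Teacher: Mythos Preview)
Your proof is correct and is exactly what the paper means by ``direct check'': the paper's own proof consists solely of the phrase ``Direct check or see \cite{Dal_Krein_Stab}'', so your verification via the ODE $\dot U=AU$, $U(0)=I$ (and the alternative via computing $A^{2k}$, $A^{2k+1}$) fills in precisely the details the authors leave to the reader. The only minor remark is that the removal of the apparent pole in $(\sqrt V)^{-1}\sin(\sqrt V t)$ does not actually require $e_1>0$, since the power series you wrote contains only nonnegative powers of $V$; but since the paper does assume $e_1>0$ this is harmless.
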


\begin{proof} Direct check or see \cite{Dal_Krein_Stab}.
\end{proof}

Let's return to the proof of the proposition \ref{2_PropExistSol}. Denote
$\psi_{1}(t)=(q^{(1)}(t),p^{(1)}(t))^{T}$. Then from the lemma \ref{2_lemmaEXP}
follows:
\begin{equation}
q_{k}^{(1)}(t)=\int_{0}^{t}f(s)S_{k,n}(t-s)ds,\;S(t)=(\sqrt{V})^{-1}\sin(\sqrt{V}t),\label{S_t}
\end{equation}
\begin{equation}
p_{k}^{(1)}(t)=\int_{0}^{t}f(s)C_{k,n}(t-s)ds,\;C(t)=\cos(\sqrt{V}t).\label{C_t}
\end{equation}
Note that consideration of the root of the operator is possible due to its positive
definiteness. It is necessary to prove that $\{q_{k}^{(1)}(t)\}_{k},\;\{p_{k}^{(1)}(t)\}_{k} \in l_{2}(\mathbb{Z})$
almost certainly. The proof is based on the lemma: \begin{lemma}
\label{2_lemmaBound_C_S} For all $t\geqslant0$ the following is true:
\[
\left|C_{k,n}\right|\leqslant\frac{v^{\rho}t^{2\rho}}{(2\rho)!}e^{\sqrt{v}t},\;\left|S_{k,n}\right|\leqslant\frac{v^{\rho}t^{2\rho+1}}{(2\rho+1)!}e^{\sqrt{v}t},
\]
where $S(t)$ and $C(t)$ are defined in $(\ref{S_t})-(\ref{C_t})$, $v=\left\Vert V\right\Vert _{l_{ 2}(\mathbb{Z})},\;\rho=\left\lceil \frac{|kn|}{K}\right\rceil $
(here $\left\lceil x\right\rceil $ is the smallest integer
not less than $x$). \end{lemma}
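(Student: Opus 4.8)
The plan is to remove the operator square roots entirely by expanding $\cos$ and $(\sqrt V)^{-1}\sin$ into their Taylor series, after which only integer powers of $V$ remain. Since $(\sqrt V)^{2m}=V^m$, one has
\[
C(t)=\cos(\sqrt V t)=\sum_{m=0}^{\infty}\frac{(-1)^m t^{2m}}{(2m)!}V^m,\qquad S(t)=(\sqrt V)^{-1}\sin(\sqrt V t)=\sum_{m=0}^{\infty}\frac{(-1)^m t^{2m+1}}{(2m+1)!}V^m,
\]
and both series converge in operator norm because $V$ is bounded ($\|V\|=v<\infty$, as $a$ has finite support). Passing to the $(k,n)$ matrix entry is then legitimate term by term, so that $C_{k,n}(t)=\sum_m \frac{(-1)^m t^{2m}}{(2m)!}(V^m)_{k,n}$ and similarly for $S_{k,n}(t)$.

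The first key step is to exploit the banded structure of $V$. Because $V_{i,j}=a(i-j)$ vanishes whenever $|i-j|>K$, in the expansion $(V^m)_{k,n}=\sum V_{k,i_1}V_{i_1,i_2}\cdots V_{i_{m-1},n}$ every nonzero term satisfies $|k-n|\le mK$ by the triangle inequality. Hence $(V^m)_{k,n}=0$ for all $m$ with $mK<|k-n|$, i.e. for $m<\rho=\lceil |k-n|/K\rceil$, so the series for $C_{k,n}$ and $S_{k,n}$ actually start at $m=\rho$. Combined with the elementary entrywise bound $|(V^m)_{k,n}|=|(e_k,V^m e_n)|\le\|V^m\|\le v^m$, this yields
\[
|C_{k,n}(t)|\le\sum_{m=\rho}^{\infty}\frac{t^{2m}v^m}{(2m)!},\qquad |S_{k,n}(t)|\le\sum_{m=\rho}^{\infty}\frac{t^{2m+1}v^m}{(2m+1)!}.
\]

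It remains to estimate these tail series, and this is the only place requiring real work. I would factor out the leading ($m=\rho$) term and reindex $m=\rho+j$; for the cosine sum this produces the factor $\frac{(2\rho)!}{(2\rho+2j)!}$, which I bound by $\frac{1}{(2j)!}$ since $(2\rho+1)(2\rho+2)\cdots(2\rho+2j)$ is a product of $2j$ integers each at least as large as $1,2,\dots,2j$. The remaining sum is then dominated by $\sum_{j\ge0}\frac{(\sqrt v\,t)^{2j}}{(2j)!}\le e^{\sqrt v\,t}$, giving exactly $|C_{k,n}(t)|\le\frac{v^\rho t^{2\rho}}{(2\rho)!}e^{\sqrt v\,t}$; the sine estimate is identical with $\frac{(2\rho+1)!}{(2\rho+1+2j)!}\le\frac{1}{(2j)!}$. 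The main obstacle is thus purely this combinatorial factorial comparison (together with the bookkeeping that makes the leading power $v^\rho t^{2\rho}$, resp. $v^\rho t^{2\rho+1}$, come out right), whereas the conceptual content is the observation that the square roots cancel and that bandedness forces the series to begin at $m=\rho$ \tire{} the latter being precisely what later yields the exponential decay in $|k-n|$.
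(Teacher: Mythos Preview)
Your argument is correct. The Taylor expansions eliminate the square roots cleanly, the banded structure of $V$ indeed forces $(V^m)_{k,n}=0$ for $m<\rho$ (with $\rho=\lceil|k-n|/K\rceil$; the ``$|kn|$'' in the statement is a typographical slip), and the factorial comparison $\frac{(2\rho)!}{(2\rho+2j)!}\le\frac{1}{(2j)!}$ (likewise for the sine) is valid since each of the $2j$ factors $2\rho+i$ dominates $i$. The paper itself gives no proof here, referring instead to Lemma~3.2 of \cite{Lykov_energy_growth}; your self-contained derivation is the standard route and almost certainly coincides with the argument found there.
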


\begin{proof} See Lemma $3.2$ in \cite{Lykov_energy_growth} (p.
$7$). \end{proof} Let's continue the proof of the proposition.
\[
E|q_{k}^{(1)}(t)|^{2}=E\left(\int_{0}^{t}f(s)S_{k,n}(t-s)ds\overline{\int_{0}^{t}f(\tau)S_{k,n}(t-\tau)d\tau}\right)=
\]

\[
=\int_{0}^{t}\int_{0}^{t}E\left(\int_{\mathbb{R}}e^{isx}Z(dx)\int_{\mathbb{R}}e^{-i\tau y}\overline{Z}(dy)\right)S_{k,n}(t-s)\overline{S_{k,n}(t-\tau)}dsd\tau=
\]
\[
=\int_{0}^{t}\int_{0}^{t}\int_{\mathbb{R}}e^{i(s-\tau)x}\mu(dx)S_{k,n}(t-s)\overline{S_{k,n}(t-\tau)}dsd\tau=
\]
\[
=\int_{0}^{t}\int_{0}^{t}B(s-\tau)S_{k,n}(t-s)\overline{S_{k,n}(t-\tau)}dsd\tau\leqslant
\]
\[
\leqslant\sup_{s\in[0,t]}B(s)\int_{0}^{t}\int_{0}^{t}\frac{v^{\rho}(t-s)^{2\rho+1}}{(2\rho+1)!}e^{\sqrt{v}(t-s)}\frac{v^{\rho}(t-\tau)^{2\rho+1}}{(2\rho+1)!}e^{\sqrt{v}(t-\tau)}dsd\tau\leqslant
\]
\[
\leqslant\sup_{s\in[0,t]}B(s)\left(\frac{v^{\rho}t{}^{2\rho+2}}{(2\rho+2)!}e^{\sqrt{v}t}\right)^{2}.
\]
Let us show the correctness of the second equality. Possibility of permutation of integration
and taking the mathematical expectation follows from the continuity of the covariance
functions of the process $f(s)$ (from the condition) and the existence of integrals
\begin{equation}
\int_{0}^{t}f(s)S_{k,n}(t-s)ds\label{int_f_S}
\end{equation}
(in the mean square sense) and the Riemann integral
\begin{equation}
\int_{0}^{t}\int_{0}^{t}B(s-\tau)S_{k,n}(t-s)\overline{S_{k,n}(t-\tau)}dsd\tau.\label{B_S_S}
\end{equation}
The last integral exists due to continuity and boundedness of
integrands, and for the existence of the integral (\ref{int_f_S})
already described conditions are enough (see \cite{KramerLeadbetter} pp. $94-129$).

From here:
\[
\sum_{k}E|q_{k}^{(1)}(t)|^{2}\leqslant\sup_{s\in[0,t]}B(s)\sum_{\rho}\left(\frac{v^{\rho}t{}^{2\rho+2}}{(2\rho+2)!}e^{\sqrt{v}t}\right)^{2}\leqslant\sup_{s\in[0,t]}B(s)e^{2\sqrt{v}t}t^{2}\sum_{\rho}\frac{(vt^{2}){}^{2\rho}}{(2\rho)!}=
\]
\[
=\sup_{s\in[0,t]}B(s)e^{2\sqrt{v}t}t^{2}ch(vt^{2})<\infty,
\]
whence, by the corollary of Levy's monotone convergence theorem (see \cite{Kolmogorov_Fomin}
p. $306$) it follows that
\[
\sum_{k}|q_{k}^{(1)}(t)|^{2}<\infty\;a.s.,
\]
i.e. $\{q_{k}^{(1)}(t)\}_{k}\in l_{2}(\mathbb{Z})$ almost certainly.
Similarly, $\{p_{k}^{(1)}(t)\}_{k}\in l_{2}(\mathbb{Z})$ almost certainly.
The assertion has been completely proven.

\subsubsection{Proof of the Theorem \ref{2_Theorem1}}

Let's introduce $\eta(t)$ by the formula:
\[
\eta(t)=-\int_{\mathbb{R}}e^{itx}R_{A}(ix)Z(dx)g,
\]
where $R_{A}(z)=(A-zI)^{-1}$ is the resolvent of the operator $A$ (here
$I$ is the identity operator over $\mathbb{Z}\mathbb{\times Z}$), and we prove that the stochastic process introduced in this way satisfies all conditions of the theorem.
Moreover, the resolvent is bounded due to the conditions of Theorem \ref{2_Theorem1}
(namely, condition A$2)$ on page $13$), which implies the convergence of the considered
integral.

First, we prove that it is a solution to the system (\ref{syst_main}):
\[
-\dot{\eta}(t)+A\eta(t)+f(t)g=\int_{\mathbb{R}}ixe^{itx}R_{A}(ix)Z(dx)g-\int_{\mathbb{R}}e^{itx}AR_{A}(ix)Z(dx)g+f(t)g=
\]
\[
=\int_{\mathbb{R}}e^{itx}(ixI-A)R_{A}(ix)Z(dx)g+\int_{\mathbb{R}}e^{itx}Z(dx)g=
\]
\[
=\int_{\mathbb{R}}e^{itx}((ixI-A)(A-ixI)^{-1}+I)Z(dx)g=0.
\]

It is possible to introduce differentiation under the integral sign in view of the existence of
integral
\[
\int_{\mathbb{R}}\left(R_{A}(ix)g,R_{A}(ix)g\right)\mu(dx)
\]
(See \cite{Ventzel}, p. $94$).

\paragraph*{Proof of item 4.}

Since the limit vector is stationary, consider $\eta(0)$
and for brevity we denote it by $\xi=\eta(0)$. Since $E\eta(0)=0,$
then
\[
cov(\eta_{j}(0),\eta_{k}(0))=E\eta_{j}(0)\eta_{k}(0)=E\xi_{j}\xi_{k} .
\]

Further, in view of the fact that the relation
\[
E\left(\int fZ(dx)\overline{\int gZ(dx)}\right)=\int_{\mathbb{R}}f\overline{g}\mu(dx),
\]
is true we have
\begin{equation}
C\equiv E\xi\overline{\xi}^{T}=\int_{\mathbb{R}}e^{itx}R_{A}(ix)gg^{T}\overline{R_{A}(ix)^{T}}e^{-itx}\mu(dx)=\int_{\mathbb{R}}R_{A}(ix)gg^{T}\overline{R_{A}(ix)^{T}}\mu(dx).\label{C_def_int}
\end{equation}
Denote
\begin{equation}
C(x)=R_{A}(ix)gg^{T}\overline{R_{A}(ix)^{T}}.\label{C_x_def}
\end{equation}
Let us find the resolvent of the operator $A$:
\[
R_{A}(\lambda)=(A-\lambda I)^{-1},
\]
\[
\left(\begin{array}{cc}
-\lambda E & E\\
-V & -\lambda E
\end{array}\right)\left(\begin{array}{cc}
A & B\\
C & D
\end{array}\right)=\left(\begin{array}{cc}
E & 0\\
0 & E
\end{array}\right),
\]
\[
\left(\begin{array}{cc}
(-\lambda A+C) & (-\lambda B+D)\\
(-VA-\lambda C) & (-VB-\lambda D)
\end{array}\right)=\left(\begin{array}{cc}
E & 0\\
0 & E
\end{array}\right),
\]
from where
\[
R_{A}(\lambda)=\left(\begin{array}{cc}
-\lambda R_{V}(-\lambda^{2}) & -R_{V}(-\lambda^{2})\\
E-\lambda^{2}R_{V}(-\lambda^{2}) & -\lambda R_{V}(-\lambda^{2})
\end{array}\right),
\]
thus
\[
R_{A}(ix)=\left(\begin{array}{cc}
-ixR_{V}(x^{2}) & -R_{V}(x^{2})\\
E+x^{2}R_{V}(x^{2}) & -ixR_{V}(x^{2})
\end{array}\right),
\]
\begin{equation}
R_{A}(ix)g=\left(\begin{array}{c}
-R_{V}(x^{2})e_{n}\\
-ixR_{V}(x^{2})e_{n}
\end{array}\right),\label{R_A_ix_g}
\end{equation}
\[
g^{T}\overline{R_{A}(ix)^{T}}=\left(\begin{array}{cc}
- & e_{n}^{T}R_{V}(x^{2})\end{array},\,ixe_{n}^{T}R_{V}(x^{2})\right).
\]
For convenience of notation we denote
\[
\rho=R_{V}(x^{2}),\varGamma=e_{n}e_{n}^{T},
\]
then (\ref{C_x_def}) becomes:
\[
C(x)=\left(\begin{array}{cc}
\rho\varGamma\rho\;\; & -ix(\rho\varGamma\rho)\\
ix(\rho\varGamma\rho)\; & x^{2}(\rho\varGamma\rho)
\end{array}\right).
\]

In (\ref{C_def_int}) $\int_{\mathbb{R}}ix(\rho\varGamma\rho)\,\mu(dx)=0$,
since $\mu$ is a symmetric measure due to the realness of the process,
$xR_{V}(x^{2})$ is an odd function. Let us pass to the integral $\int_{\mathbb{R}}\rho\varGamma\rho\,\mu(dx)$.
Denote
\begin{equation}
c_{k,j}=(Ce_{k},e_{j})=\int_{\mathbb{R}}(\rho\varGamma\rho e_{k},e_{j})\mu(dx),\label{c_k_j}
\end{equation}
\begin{equation}
c_{k,j}(x)=(\rho\varGamma\rho e_{k},e_{j})=(\varGamma\rho e_{k},\rho e_{j}).\label{c_k_j_x}
\end{equation}
Note that
\[
\widehat{\varGamma x}(\lambda)=\sum_{j}(\varGamma x)_{j}e^{ij\lambda}=x_{n}e^{in\lambda},
\]
\[
\widehat{\rho e_{j}}=\frac{\widehat{e_{j}}}{\omega^{2}(\lambda)-x^{2}}=\frac{e^{ij\lambda}}{\omega^{2}(\lambda)-x^{2}}=e^{ij\lambda}b_{x}(\lambda),
\]
\[
\widehat{\varGamma\rho e_{k}}=(\rho e_{k})_{n}e^{in\lambda}=(\rho e_{k},e_{n})e^{in\lambda}=e^{in\lambda}(\widehat{\rho e_{k}},\widehat{e_{n}})=
\]
\[
=\frac{e^{in\lambda}}{2\pi}\int_{0}^{2\pi}\frac{e^{-in\nu}e^{ik\nu}d\nu}{\omega^{2}(\nu)-x^{2}},
\]
hence, in view of the self-adjointness of the operator $\rho$: 
\[
(\rho\varGamma\rho e_{k},e_{j})=(\varGamma\rho e_{k},\rho e_{j})=(\widehat{\varGamma\rho e_{k}},\widehat{\rho e_{j}})=
\]
\[
=\left(\frac{1}{2\pi}\right)^{2}\int_{0}^{2\pi}\frac{e^{i(n-j)\lambda}d\lambda}{\omega^{2}(\lambda)-x^{2}}\int_{0}^{2\pi}\frac{e^{i(k-n)\lambda}d\lambda}{\omega^{2}(\lambda)-x^{2}},
\]
and, taking into account the parity and periodicity of the integrands, we arrive at
to
\[
(\rho\varGamma\rho e_{k},e_{j})=\left(\frac{1}{2\pi}\right)^{2}\int_{0}^{2\pi}\frac{\cos((n-j)\lambda)d\lambda}{\omega^{2}(\lambda)-x^{2}}\int_{0}^{2\pi}\frac{\cos((k-n)\lambda)d\lambda}{\omega^{2}(\lambda)-x^{2}}.
\]
Consider the cases:

1. $j=k\neq n\rightarrow+\infty$. Here
\[
\int_{0}^{2\pi}b_{x}(\lambda)\cos((k-n)\lambda)d\lambda=\frac{1}{k}(b_{x}(\lambda)\sin((k-n)\lambda)|_{\lambda=0}^{\lambda=2\pi}-\int_{0}^{2\pi}b'_{x}(\lambda)\sin((k-n)\lambda)d\lambda)=
\]
\[
=\frac{1}{k^{2}}(b''_{x}(\lambda)\cos((k-n)\lambda)|_{\lambda=0}^{\lambda=2\pi}-\int_{0}^{2\pi}b''_{x}(\lambda)\cos((k-n)\lambda)d\lambda)=...,
\]
thus there is an exponentially fast decay (faster than
any degree, i.e. $\overline{o}\left(\frac{1}{(k-n)^{\infty}}\right)$).

2. In other cases, consider
\[
c_{k,j}=\int_{\mathbb{R}}c_{k,j}(x)\mu(dx),
\]
where $c_{k,j}(x)$ is entered in (\ref{c_k_j_x}). Denote:
\[
h_{k}(x)=\frac{1}{2\pi}\int_{0}^{2\pi}\frac{e^{i(n-k)\lambda}d\lambda}{\omega^{2}(\lambda)-x^{2}},
\]
We make a replacement $z=e^{i\lambda}$:

\begin{equation}
h_{k}(x)=\frac{1}{2\pi i}\ointop_{|z|=1}\frac{1}{P(z)-x^{2}}z^{n-k-1}dz,\label{h_k_x}
\end{equation}
where
\[
P(z)=\omega^{2}(\lambda)|_{\lambda=\lambda(z)},
\]
i.e.
\[
P(z)=\sum_{j}a(j)e^{ij\lambda}=a(0)+\sum_{j=1}^{K}a(j)(z^{j}+z^{-j}).
\]
We are interested in the roots of the equation
\[
P(z)-x^{2}\equiv a(0)+\sum_{j=1}^{K}a(j)(z^{j}+z^{-j})-x^{2}=0,
\]
\begin{equation}
a(K)z^{2K}+...+(-x^{2}+a(0))z^{K}+...+a(K)=0.\label{eq_P_z}
\end{equation}
Obviously, $P(z)=P(1/z),$ hence, if $z$ is a root, then $1/z$
is the root, so the zeros are invariant under the inversion of the unit
circle $|z|=1$. These values are the values from the spectrum,
which we ''avoid''. In total, we have $K$ inverse pairs, which
needs to be bypassed. You can choose a ring (neighborhood of the unit circle),
where the inverse function is holomorphic
\begin{equation}
g(z)=\frac{1}{P(z)-x^{2}}. \label{g(z)_def}
\end{equation}
To do this, we find the maximum modulo root of the equation (\ref{eq_P_z}),
lying inside the unit circle. We denote its modulus by $R(x)$. Then
its inverse pair has modulus equal to $\frac{1}{R(x)}$, moreover
this will be the smallest modulus of roots lying outside the unit circle.
Thus, $g(z)$ is holomorphic in the ring
\[
R(x)<|z|<\frac{1}{R(x)},\:0<R(x)<1<\frac{1}{R(x)}.
\]
We choose $\epsilon$ close to zero and denote by $\rho=\frac{1-\epsilon}{R(x)}$,
then the contour $|z|=\rho$ lies in the holomorphy ring $g(z)$, hence
it is possible in (\ref{h_k_x}) to replace the integration contour with the considered one,
then
\[
h_{k}(x)=\frac{1}{2\pi i}\ointop_{|z|=\rho}\frac{1}{P(z)-x^{2}}z^{n-k-1}dz,
\]
let's evaluate the module:
\[
|h_{k}(x)|\leqslant\frac{1}{2\pi}\ointop_{|z|=\rho}|g(z)z^{n-k-1}|\cdot|dz|=\frac{1}{2\pi}\ointop_{|z|=\rho}|g(z)|\rho^{n-k-1}\cdot|dz|,
\]
in the last integral we make the change $z=\rho e^{i\phi},$ then
\[
|h_{k}(x)|\leqslant\frac{\rho^{n-k-1}}{2\pi}\int_{0}^{2\pi}|g(\rho e^{i\phi})|\rho d\phi=\frac{\rho^{n-k}}{2\pi}\int_{0}^{2\pi}|g(\rho e^{i\phi})|d\phi=\frac{\rho^{n-k}}{2\pi}r(x).
\]
Note that since there exists $q$ such that for all $x$ from $\mathbb{R}\setminus E$
$\rho\geqslant1/q>1$ is true, whence $1/\rho\leqslant q<1$ and
\[
|h_{k}(x)|\leqslant\frac{q^{k-n}}{2\pi}r(x),
\]
\[
c_{k,k}=\sigma_{k}^{2}\leqslant\int_{\mathbb{R}}|h_{k}(x)|^{2}\mu(dx)\leqslant\frac{q^{2(k-n)}}{4\pi^{2}}\int_{\mathbb{R}}r^{2}(x)\mu(dx)\rightarrow0,\:k\rightarrow+\infty,
\]

if $\int r^{2}(x)\mu(dx)$ converges. Let's prove that this is indeed the case:
\[
r(x)=\frac{1}{2\pi}\int_{0}^{2\pi}|g(\rho e^{i\phi})|d\phi\leqslant\frac{1}{2\pi}\max_{z:|z|=\rho}|g(z)|\cdot2\pi=\max_{z:|z|=\rho}|g(z)|,
\]
where $g(z)$ is introduced in (\ref{g(z)_def}). Let us notice, that

\begin{equation}
|P(z)-x^{2}|\geqslant||P(z)|-x^{2}|\geqslant\min\{|I-x^{2}|,\;|S-x^{2}|\}>0,\label{P(z)_min_I_S}
\end{equation}
where
\[
I=I(\rho)=\inf_{z:|z|=\rho}|P(z)|,\;S=S(\rho)=\sup_{z:|z|=\rho}|P(z)|,
\]
and the last strict inequality holds due to the choice of radius $\rho$ of the circle.

Further, for $\rho=1$ the segment $[I(\rho),\;S(\rho)]$ coincides with the set
$E$. Since $I$ and $S$ are continuous functions of $\rho$, there exists
a neighborhood of the point $\rho=1$ such that for any $\rho$ from this
neighborhood segment $[I(\rho),\;S(\rho)]$ lies in $U$, where $U$ is defined
in condition A2. Then there is a positive constant $\epsilon>0$
such that for any $x\in\mathbb{R}\diagdown U$ we have the following inequality:
\[
|P(z)-x^{2}|\geqslant\epsilon>0,
\]
wherefrom
\[
|g(z)|\leqslant\frac{1}{\epsilon}.
\]
So, for all $z:|z|=\rho,$ and all $x\in\mathbb{R}\diagdown U$
we have an estimate:
\[
r(x)\leqslant\frac{1}{\epsilon}.
\]
For $x\rightarrow\infty$ the function $r(x)$ decreases as $\frac{1}{x^{2}}$
due to the estimate (\ref{P(z)_min_I_S}), whence the convergence of the integral follows.

\paragraph*{Next, let's move on to item 3.}

It follows from item $4$ that the initial conditions $\eta(0)\in L$. Indeed:

\[
\sum_{k\in\mathbb{Z}}Dq_{k}^{\infty}(0)\leqslant c_{1}\sum_{k\in\mathbb{Z}}r^{|n-k|}\leqslant c_{1}\sum_{k\in\mathbb{Z}}r^{|k|}<\infty,
\]
\[
\sum_{k\in\mathbb{Z}}Dp_{k}^{\infty}(0)\leqslant c_{2}\sum_{k\in\mathbb{Z}}r^{|n-k|}\leqslant c_{2}\sum_{k\in\mathbb{Z}}r^{|k|}<\infty,
\]
whence, by the corollary of Levy's monotone convergence theorem (see \cite{Kolmogorov_Fomin}
p. $306$) it follows that
\[
\sum_{k}|q_{k}^{\infty}(0)|^{2}<\infty\;a.s.,
\]
i.e. $\{q_{k}^{\infty}(0)\}_{k}\in l_{2}(\mathbb{Z})$ almost certainly.
Similarly, $\{p_{k}^{\infty}(0)\}_{k}\in l_{2}(\mathbb{Z})$ is almost
surely. Then the statement \ref{2_PropExistSol} implies what is required.

\paragraph*{Let's prove the second item.}

The difference $\psi(t)-\eta(t)$ at the initial time point lies in $L$
and is a solution of the homogeneous equation, hence, componentwise tends to zero
almost surely. Indeed, $\epsilon(t)=\psi(t)-\eta(t)$
is a solution to the homogeneous equation
\begin{equation}
\dot{\epsilon}=A\epsilon,\label{eq_odnor-1}
\end{equation}
with the initial condition $\epsilon(0)\in L$ almost surely for all $t\geqslant0$,
and has the form:
\[
\epsilon(t)=e^{At}\epsilon(0)=e^{At}\left(\begin{array}{c}
q(0)\\
p(0)
\end{array}\right), 
\]
which obviously implies continuity and infinite differentiability a.s. of process trajectories.
Consider one of the coordinates:
\[
\left(\epsilon(t),\left(\begin{array}{c}
e_{k}\\
0
\end{array}\right)\right)=\left(e^{At}\left(\begin{array}{c}
q(0)\\
p(0)
\end{array}\right),\left(\begin{array}{c}
e_{k}\\
0
\end{array}\right)\right)=\left(\widehat{e^{At}\left(\begin{array}{c}
q(0)\\
p(0)
\end{array}\right)},\widehat{\left(\begin{array}{c}
e_{k}\\
0
\end{array}\right)}\right)=^{(3)}
\]
\[
=\left(\left(\begin{array}{c}
\cos(\omega(\lambda)t)\widehat{q(0)}(\lambda)+\frac{\sin(\omega(\lambda)t)}{\omega(\lambda)}\widehat{p(0)}(\lambda)\\
-\omega(\lambda)\sin(\omega(\lambda)t)\widehat{q(0)}(\lambda)+\cos(\omega(\lambda)t)\widehat{p(0)}(\lambda)
\end{array}\right),\left(\begin{array}{c}
e^{ik\lambda}\\
0
\end{array}\right)\right)=
\]
\[
=\frac{1}{2\pi}\int_{0}^{2\pi}e^{ik\lambda}\left(\cos(\omega(\lambda)t)\widehat{q(0)}(\lambda)+\frac{\sin(\omega(\lambda)t)}{\omega(\lambda)}\widehat{p(0)}(\lambda)\right)d\lambda\longrightarrow0,\:t\rightarrow\infty.
\]
The equation $=^{(3)}$ uses the formulas (\ref{e_At_q_p}) and (\ref{g_Furie}).
And converging to zero takes place due to Corollary $2$ in \cite{KarazubaArchChub}
(p. $6$), since:
\begin{equation}
\left|\frac{1}{2\pi}\int_{0}^{2\pi}e^{ik\lambda}\left(\cos(\omega(\lambda)t)\widehat{q(0)}(\lambda)+\frac{\sin(\omega(\lambda)t)}{\omega(\lambda)}\widehat{p(0)}(\lambda)\right)d\lambda\right|\leqslant\frac{c}{\sqrt{t}},\label{delta_t_above}
\end{equation}
where the constant $c$ does not depend on $x$. Indeed, the corollary asserts:

\textit{Corollary 2 (Arkhipov, Karatsuba, Chubarikov).} Let $g(x)$
-- piecewise monotone continuous function, $p$ -- number of its monotonicity segments, 
$\max_{0\leqslant x\leqslant1}|g(x)|=H$. Let real-valued function $f(x)$ for $0<x<1$ has an $n$-order derivative, $n>1$,
moreover, for some $A>0$, for all $0<x<1$, the inequality
$|f^{(n)}(x)|\geqslant A$ holds. Then for $G=\int_{0}^{1}g(x)e^{2\pi if(x)}dx$
holds:
\begin{equation}
|G|\leqslant H\min\{1;\:24pnA^{-1/n}\}.\label{G_ineq_Karazuba}
\end{equation}

Let's make a change in the integral (\ref{delta_t_above}): 
\[
\frac{1}{2\pi}\int_{0}^{2\pi}e^{ik\lambda}\left(\cos(\omega(\lambda)t)\widehat{q(0)}(\lambda)+\frac{\sin(\omega(\lambda)t)}{\omega(\lambda)}\widehat{p(0)}(\lambda)\right)d\lambda=
\]
\[
=\int_{0}^{1}e^{ik2\pi\zeta}\left(\cos(\omega(2\pi\zeta)t)\widehat{q(0)}(2\pi\zeta)+\frac{\sin(\omega(2\pi\zeta)t)}{\omega(2\pi\zeta)}\widehat{p(0)}(2\pi\zeta)\right)d\zeta=
\]
\[
=\frac{1}{2}\int_{0}^{1}\cos\left(2\pi k\zeta\right)\widehat{q(0)}(2\pi\zeta)\left(e^{it\omega(2\pi\zeta)}+e^{-it\omega(2\pi\zeta)}\right)d\zeta+
\]
\[
+\frac{i}{2}\int_{0}^{1}\sin\left(2\pi k\zeta\right)\widehat{q(0)}(2\pi\zeta)\left(e^{it\omega(2\pi\zeta)}+e^{-it\omega(2\pi\zeta)}\right)d\zeta+
\]
\[
+\frac{1}{2i}\int_{0}^{1}\cos\left(2\pi k\zeta\right)\frac{\widehat{p(0)}(2\pi\zeta)}{\omega(2\pi\zeta)}\left(e^{it\omega(2\pi\zeta)}-e^{-it\omega(2\pi\zeta)}\right)d\zeta+
\]
\[
+\frac{1}{2}\int_{0}^{1}\sin\left(2\pi k\zeta\right)\frac{\widehat{p(0)}(2\pi\zeta)}{\omega(2\pi\zeta)}\left(e^{it\omega(2\pi\zeta)}-e^{-it\omega(2\pi\zeta)}\right)d\zeta,
\]
thus it is necessary to estimate $8$ integrals. Consider one
of them (for the rest the estimate is obtained similarly):

\[
\int_{0}^{1}\sin\left(2\pi k\zeta\right)\frac{\widehat{p(0)}(2\pi\zeta)}{\omega(2\pi\zeta)}e^{it\omega(2\pi\zeta)}d\zeta.
\]
According to the notation of the corollary $g(\zeta)=\sin\left(2\pi k\zeta\right)\frac{\widehat{p(0)}(2\pi\zeta)}{\omega(2\pi \zeta)}$.
Consider
\[
\omega^{2}(2\pi\zeta)=a(0)+2\sum_{k=1}^{K}a(k)\cos(2\pi k\zeta).
\]
As a linear combination of cosines it is continuous and piecewise monotonic
on $[0;1]$ function. The function $h(t)=\sqrt{t}$ is continuous and monotonic
over the entire domain of definition, whence we obtain that the composition $\omega(2\pi\zeta)$
of these two functions is continuous and piecewise monotonic on $[0;1]$.
Further, in view of (\ref{omega_2}), as well as the monotonicity and continuity of
function $h_{1}(t)=\frac{1}{t}$, function $\frac{1}{\omega(2\pi\zeta)}$
is continuous and piecewise monotonic as a composition of functions. Function $\widehat{p(0)}(2\pi\zeta)$
is also continuous as the Fourier transform of an element from $l_{2}\left(\mathbb{Z}\right)$,
hence $g(\zeta)$ is continuous and piecewise monotone as a product of
functions with corresponding properties. According to the Weierstrass theorem, on
$[0;1]$ it reaches its maximum on the interval, which we denote as $H$. Now consider $f(\zeta)=t\frac{\omega(2\pi\zeta)}{2\pi}$.
This function is $n$ times differentiable (we can take arbitrary $n>1$):
\[
f^{'}(\zeta)=t\omega'(2\pi\zeta)=-t\frac{\sum_{k=1}^{K}a(k)k\sin(2\pi k\zeta)}{\omega(2\pi\zeta)},
\]
\[
f^{''}(\zeta)=2\pi t\omega''(2\pi\zeta)=-2\pi t\frac{\left(\sum_{k=1}^{K}a(k)k^{2}\cos(2\pi k\zeta)\right)\omega(2\pi\zeta)-\omega'(2\pi\zeta)\sum_{k=1}^{K}a(k)k\sin(2\pi k\zeta)}{\omega^{2}(2\pi\zeta)}=
\]
\[
=-2\pi t\frac{\left(\sum_{k=1}^{K}a(k)k^{2}\cos(2\pi k\zeta)\right)\omega^{2}(2\pi\zeta)+\left(\sum_{k=1}^{K}a(k)k\sin(2\pi k\zeta)\right)^{2}}{\omega^{3}(2\pi\zeta)}.
\]
Numerator as a finite linear combination of trigonometric functions
has a finite number of zeros on the segment $[0;1]$ (if it has zeros somewhere at all). If there are no zeros, then the second derivative is separable
from zero, the required constant $A$ exists. If zeros exist,
then consider their $\varepsilon$-neighborhoods. Out of these neighborhoods
the second derivative is separable from zero. In these surroundings you can see
derivatives of order three or higher. In view of the analyticity of the function, and also
that it is not a constant, there exists the number $n$ of the derivative,
under which the corresponding derivative has no zeros in the chosen
neighborhood. In this case, in each neighborhood we have the estimate
(\ref{G_ineq_Karazuba}). Whence, in view of the fact that $A=A_{0}t$, by the corollary
we get the estimate (\ref{delta_t_above}). Which is what was required.

\subsubsection{Proof of the Theorem \ref{2_Theorem2}}

Strictly stationary process with finite first two
moments is a second-order stationary process, therefore,
we have the right to use the results of the previous Theorem. Then $\psi_{k}(t)-\eta_{k}(t)\overset{a.s.}{\rightarrow}0$.

The lemma is required: \begin{lemma} \label{2_lemma_Converge} Let
for $\xi(t),\,\eta(t)$ -- one-dimensional real random processes be
true: $\xi(t)-\eta(t)\overset{a.s.}{\rightarrow}0,\:t\rightarrow+\infty,$
and for any $t>0$ $\eta(t)\overset{d}{=}\eta_{0}$. Then $\xi(t)\overset{d}{\rightarrow}\eta_{0}$.
\end{lemma}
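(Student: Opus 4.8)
The plan is to reduce the statement to the pointwise convergence of characteristic functions and then invoke L\'evy's continuity theorem. Writing $\varphi_t(s)=E\,e^{is\xi(t)}$ and $\varphi_0(s)=E\,e^{is\eta_0}$, it suffices to show $\varphi_t(s)\to\varphi_0(s)$ for every $s\in\mathbb{R}$ as $t\to+\infty$, since for real random variables convergence of characteristic functions is equivalent to convergence in distribution. The key device is the decomposition $\xi(t)=\eta(t)+\zeta(t)$ with $\zeta(t)=\xi(t)-\eta(t)$, together with the hypothesis $\eta(t)\overset{d}{=}\eta_0$, which lets me replace $E\,e^{is\eta(t)}$ by $\varphi_0(s)$ exactly for every $t$.

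First I would write, using $E\,e^{is\eta(t)}=\varphi_0(s)$,
\[
\varphi_t(s)-\varphi_0(s)=E\Bigl(e^{is\eta(t)}\bigl(e^{is\zeta(t)}-1\bigr)\Bigr),
\]
and then, since $|e^{is\eta(t)}|=1$, estimate
\[
|\varphi_t(s)-\varphi_0(s)|\leqslant E\bigl|e^{is\zeta(t)}-1\bigr|.
\]
The integrand here is dominated by the constant $2$, and, because $\zeta(t)\to 0$ almost surely, it tends to $0$ almost surely as $t\to+\infty$. By the dominated convergence theorem the right-hand side tends to $0$, so $\varphi_t(s)\to\varphi_0(s)$ for each fixed $s$. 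An application of L\'evy's continuity theorem then yields $\xi(t)\overset{d}{\to}\eta_0$, as required.

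There is no serious obstacle here; the only point needing care is that the parameter $t$ runs continuously to $+\infty$, whereas the dominated convergence and continuity theorems are usually phrased along sequences. This is routine: almost sure convergence $\zeta(t)\to0$ as $t\to+\infty$ forces $\zeta(t_n)\to0$ almost surely for every sequence $t_n\to+\infty$, so the estimate above holds along each such sequence, and convergence of $\varphi_{t_n}(s)$ for all sequences $t_n\to+\infty$ is exactly $\varphi_t(s)\to\varphi_0(s)$. Equivalently, the whole argument can be packaged as Slutsky's theorem: $\zeta(t)\to0$ almost surely gives $\zeta(t)\to0$ in probability, while $\eta(t)\overset{d}{=}\eta_0$ converges trivially in distribution to $\eta_0$, whence $\xi(t)=\eta(t)+\zeta(t)\overset{d}{\to}\eta_0$.
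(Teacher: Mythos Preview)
Your proof is correct. The paper's own argument is precisely the Slutsky-theorem packaging you mention in your last paragraph: from $\xi(t)-\eta(t)\overset{a.s.}{\to}0$ deduce convergence in probability to $0$, note $\eta(t)\overset{d}{=}\eta_0$ gives $\eta(t)\overset{d}{\to}\eta_0$, and apply Slutsky with $g(x,y)=x+y$. Your primary route via characteristic functions and dominated convergence is just an explicit verification of that special case of Slutsky, so the two arguments coincide in substance; the paper simply quotes Slutsky as a black box whereas you unpack it first.
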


\begin{proof} $\xi(t)-\eta(t)\overset{a.s.}{\rightarrow}0$ implies
convergence $\xi(t)-\eta(t)\overset{P}{\rightarrow}0$. Further, from the condition
for all $t$ $\eta(t)\overset{d}{=}\eta_{0}$, convergence of $\eta(t)\overset{d}{\rightarrow}\eta_{0}$ follows. From where, according to the Slutsky lemma
for the function $g(x,y)=x+y$, we obtain the assertion of the lemma. \end{proof}

\subsubsection{Proof of the Theorem \ref{2_Theorem3}}

Let us represent the energy of the system in the following form:
\[
H(\psi)=\frac{1}{2}(\psi,G\psi),\;G=\left(\begin{array}{cc}
V & 0\\
0 & E
\end{array}\right).
\]
Let us represent the solution as
\[
\psi(t)=\epsilon(t)+\eta(t),
\]
\[
2H(\psi)=(\psi,G\psi)=(\psi,\psi)_{H}=(\epsilon,\epsilon)_{H}+(\epsilon,\eta)_{H}+(\eta,\epsilon)_{H}+(\eta,\eta)_{H}.
\]
Then the averages of the second and third terms are equal to zero due to the fact
that $E\eta=0$. Further,
\[
\eta^{T}G\eta=\int_{0}^{t}g^{T}e^{A^{T}(t-s_{1})}f(s_{1})ds_{1}\int_{0}^{t}Ge^{A(t-s_{2})}gf(s_{2})ds_{2}=
\]
\[
=\int_{0}^{t}\int_{0}^{t}(e^{A(t-s_{1})}g,e^{A(t-s_{2})}g)_{H}f(s_{1})f(s_{2})ds_{1}ds_{2}.
\]

Possibility of permutation of integration and taking the mathematical expectation
again follows from the continuity of the covariance function of the process $f(s)$
(from the condition of the theorem) and the existence of integrals
\begin{equation}
\int_{0}^{t}f(s)Ge^{A(ts)}gds,\quad\int_{0}^{t}f(s)e^{A(ts)}gds,\label {scal_H_int}
\end{equation}
(in the mean square sense) and the Riemann integral existence
\[
\int_{0}^{t}\int_{0}^{t}B(s-\tau)g^{T}e^{A^{T}(t-s)}Ge^{A(t-\tau)}gdsd\tau.
\]
The last integral exists due to integrands continuity and boundedness, and for the existence of integrals (\ref{scal_H_int})
already described conditions are enough (see \cite{KramerLeadbetter} p.
$94-129$). Then
\begin{equation}
E(\eta,\eta)_{H}=\int_{0}^{t}\int_{0}^{t}(e^{A(t-s_{1})}g,e^{A(t-s_{2})}g)_{H}B(s_{1}-s_{2})ds_{1}ds_{2}.\label{E_H_t}
\end{equation}
We introduce the scalar product by the formula
\[
(\psi_{1},\psi_{2})_{H}=\frac{1}{2\pi}\int_{0}^{2\pi}\widehat{p_{1}}(\lambda)\overline{\widehat{p_{2}}}(\lambda)+\omega^{2}(\lambda)\widehat{q_{1}}(\lambda)\overline{\widehat{q_{2}}}(\lambda)d\lambda.
\]
Then

\[
(\psi,\psi)_{H}=\frac{1}{2\pi}\int_{0}^{2\pi}|\widehat{p}(\lambda)|^{2}+\omega^{2}(\lambda)|\widehat{q}(\lambda)|^{2}d\lambda.
\]

Further, taking into account (\ref{exp_At}) we get:

\begin{equation}
\widehat{e^{At}\left(\begin{array}{c}
q\\
p
\end{array}\right)}=\left(\begin{array}{c}
\cos(\omega(\lambda)t)\widehat{q}(\lambda)+\frac{\sin(\omega(\lambda)t)}{\omega(\lambda)}\widehat{p}(\lambda)\\
-\omega(\lambda)\sin(\omega(\lambda)t)\widehat{q}(\lambda)+\cos(\omega(\lambda)t)\widehat{p}(\lambda)
\end{array}\right),\label{e_At_q_p}
\end{equation}
and, since
\begin{equation}
\widehat{g}=\left(\begin{array}{c}
\widehat{0}\\
\widehat{e_{n}}
\end{array}\right)=\left(\begin{array}{c}
0\\
e^{in\lambda}
\end{array}\right),\label{g_Furie}
\end{equation}
so

\[
\widehat{e^{At}g}=e^{in\lambda}\left(\begin{array}{c}
\frac{\sin t\omega(\lambda)}{\omega(\lambda)}\\
\cos t\omega(\lambda)
\end{array}\right),
\]
whence from (\ref{E_H_t}) follows:
\[
E(\eta,\eta)_{H}=\frac{1}{2\pi}\int_{0}^{t}\int_{0}^{t}\int_{0}^{2\pi}\cos((t-s_{1})\omega(\lambda))\cos((t-s_{2})\omega(\lambda))B(s_{1}-s_{2})d\lambda ds_{1}ds_{2}+
\]
\[
+\frac{1}{2\pi}\int_{0}^{t}\int_{0}^{t}\int_{0}^{2\pi}\sin((t-s_{1})\omega(\lambda))\sin((t-s_{2})\omega(\lambda))B(s_{1}-s_{2})d\lambda ds_{1}ds_{2}=
\]
\[
=\frac{1}{2\pi}\int_{0}^{t}\int_{0}^{t}\int_{0}^{2\pi}\cos((s_{1}-s_{2})\omega(\lambda))B(s_{1}-s_{2})d\lambda ds_{1}ds_{2}.
\]
Further, in view of (\ref{process}):
\[
E(\eta,\eta)_{H}=\frac{1}{2\pi}\int_{0}^{t}\int_{0}^{t}\int_{0}^{2\pi}\cos((s_{1}-s_{2})\omega(\lambda))\int_{\mathbb{R}}e^{ix(s_{1}-s_{2})}\mu(dx)d\lambda ds_{1}ds_{2}=
\]
\[
=\frac{1}{2\pi}\int_{\mathbb{R}}\int_{0}^{2\pi}f(t,\lambda,x)d\lambda\:\mu(dx),
\]
where
\begin{equation}
f(t,\lambda,x)=\int_{0}^{t}\int_{0}^{t}\cos((s_{1}-s_{2})\omega(\lambda))e^{ix(s_{1}-s_{2})}ds_{1}ds_{2}=\frac{1}{2}(f_{+}(t,\lambda,x)+f_{-}(t,\lambda,x)),\label{f_t_lambda_x}
\end{equation}
where
\[
f_{\pm}(t,\lambda,x)=\int_{0}^{t}\int_{0}^{t}e^{i(s_{1}-s_{2})(x\pm\omega(\lambda))}ds_{1}ds_{2}.
\]
Denote $\gamma=x\pm\omega(\lambda)$, then
\[
f_{\pm}(t,\lambda,x)=\int_{0}^{t}\int_{0}^{t}e^{i(s_{1}-s_{2})\gamma}ds_{1}ds_{2}=\left|\int_{0}^{t}e^{is\gamma}ds\right|^{2}=\left|\frac{e^{is\gamma}}{i\gamma}|_{0}^{t}\right|^{2}=\frac{1}{\gamma^{2}}\left|e^{it\gamma}-1\right|^{2}=
\]
\[
=\frac{1}{\gamma^{2}}\left(\left(1-\cos t\gamma\right)^{2}+\left(\sin t\gamma\right)^{2}\right)=\frac{2}{\gamma^{2}}\left(1-\cos t\gamma\right).
\]
Then (\ref{f_t_lambda_x}) is converted to:
\[
f(t,\lambda,x)=\frac{1-\cos t(x+\omega(\lambda))}{(x+\omega(\lambda))^{2}}+\frac{1-\cos t(x-\omega(\lambda))}{(x-\omega(\lambda))^{2}}=I_{+}(t,\lambda,x)+I_{-}(t,\lambda,x),
\]
further
\[
E(\eta,\eta)_{H}=\frac{1}{2\pi}\int_{\mathbb{R}}\int_{0}^{2\pi}\left(I_{+}(t,\lambda,x)+I_{-}(t,\lambda,x)\right)d\lambda\:\mu(dx),
\]
\[
\int_{0}^{2\pi}I_{+}(t,\lambda,x)d\lambda=\int_{0}^{2\pi}\frac{1-\cos t(x+\omega(\lambda))}{(x+\omega(\lambda))^{2}}d\lambda=\int_{0}^{2\pi}\frac{d\lambda}{(x+\omega(\lambda))^{2}}-\int_{0}^{2\pi}\frac{\cos t(x+\omega(\lambda))}{(x+\omega(\lambda))^{2}}d\lambda.
\]

By corollary $2$ in \cite{KarazubaArchChub} (p. $6$) has
place estimation:
\begin{equation}
r_{+}(t,x)=\left|\int_{0}^{2\pi}\frac{\cos t(x+\omega(\lambda))}{(x+\omega(\lambda))^{2}}d\lambda\right|\leqslant\frac{c}{\sqrt{t}},\label{r_+}
\end{equation}
where the constant $c$ does not depend on $x$. Indeed, we make the substitution
in the integral in (\ref{r_+}):
\[
\int_{0}^{2\pi}\frac{\cos t(x+\omega(\lambda))}{(x+\omega(\lambda))^{2}}d\lambda=2\pi\int_{0}^{1}\frac{\cos t(x+\omega(2\pi\zeta))}{(x+\omega(2\pi\zeta))^{2}}d\zeta=
\]
\[
=\pi\int_{0}^{1}\frac{e^{it(x+\omega(2\pi\zeta))}+e^{-it(x+\omega(2\pi\zeta))}}{(x+\omega(2\pi\zeta))^{2}}d\zeta.
\]
According to the notation of the corollary $g(\zeta)=\frac{1}{(x+\omega(2\pi\zeta))^{2}}>0$.
From the proof of item $2$ of Theorem \ref{2_Theorem1} we have that
$\omega(2\pi\zeta)$ is continuous and piecewise monotonic on $[0;1]$. Further,
the support of the measure $\mu$ is separated from the root of the spectral set of the operator
$V$, and therefore the function $x+\omega(2\pi\zeta)$ is separated from zero for
all $\zeta\in[0;1]$. Then $(x+\omega(\lambda))^{2}$ is separated
from zero and, in view of the monotonicity and continuity of the function $h_{1}(t)=\frac{1}{t}$,
$g(\zeta)$ is continuous and piecewise monotonic as a composition of functions.
By the Weierstrass theorem it reaches its maximum on $[0;1]$,
which we denote by $H$. Now consider $f(\zeta)=\frac{t(x+\omega(2\pi\zeta))}{2\pi}$.
This function is $n$ times differentiable (we can take arbitrary $n>1$):
\[
f^{'}(\zeta)=t\omega'(2\pi\zeta)=-t\frac{\sum_{k=1}^{K}a(k)k\sin(2\pi k\zeta)}{\omega(2\pi\zeta)},
\]
\[
f^{''}(\zeta)=2\pi t\omega''(2\pi\zeta)=-2\pi t\frac{\left(\sum_{k=1}^{K}a(k)k^{2}\cos(2\pi k\zeta)\right)\omega(2\pi\zeta)-\omega'(2\pi\zeta)\sum_{k=1}^{K}a(k)k\sin(2\pi k\zeta)}{\omega^{2}(2\pi\zeta)}=
\]
\[
=-2\pi t\frac{\left(\sum_{k=1}^{K}a(k)k^{2}\cos(2\pi k\zeta)\right)\omega^{2}(2\pi\zeta)+\left(\sum_{k=1}^{K}a(k)k\sin(2\pi k\zeta)\right)^{2}}{\omega^{3}(2\pi\zeta)}.
\]
Numerator as a finite linear combination of trigonometric functions
has a finite number of zeros on the segment $[0;1]$ (if it has
zeros somewhere). If there are no zeros, then the second derivative is separable from
zero, the required constant $A$ exists. If zeros exist,
then consider their $\varepsilon$-neighborhoods. Out of these neighborhoods
the second derivative is separable from zero. In these surroundings you can see
derivatives of order three or higher. In view of the analyticity of the function, and also
that it is not a constant, there is an index $n$ of the derivative,
which has no zeros in the chosen neighborhood. In that case, in each
neighborhood, the estimate (\ref{G_ineq_Karazuba}) takes place. From where in view
$A=A_{0}t$ we obtain the estimate (\ref{r_+}) by the corollary.
Which is what was required.

Hereof 
\[
E(\eta,\eta)_{H}=\frac{1}{2\pi}\int_{\mathbb{R}}\int_{0}^{2\pi}\left(\frac{1}{(x+\omega(\lambda))^{2}}+\frac{1}{(x-\omega(\lambda))^{2}}\right)d\lambda\:\mu(dx)+R(t),
\]
\[
|R(t)|\leqslant\frac{1}{2\pi}\int_{\mathbb{R}}\frac{2c}{\sqrt{t}}\mu(dx)=\frac{c}{\pi\sqrt{t}}B(0)=\frac{c\sigma^{2}}{\pi\sqrt{t}},\;\sigma^{2}=Ef_{s}^{2}.
\]

Where do we get that
\[
\lim_{t\rightarrow+\infty}E(\eta,\eta)_{H}=\frac{1}{4\pi}\int_{\mathbb{R}}\int_{0}^{2\pi}\left(\frac{1}{(x+\omega(\lambda))^{2}}+\frac{1}{(x-\omega(\lambda))^{2}}\right)d\lambda\:\mu(dx)=
\]
\[
=\frac{1}{2\pi}\int_{\mathbb{R}}\int_{0}^{2\pi}\frac{\omega^{2}(\lambda)+x^{2}}{(\omega^{2}(\lambda)-x^{2})^{2}}d\lambda\mu(dx).
\]
It remains to find $E(\epsilon,\epsilon)_{H}$:
\begin{equation}
(\epsilon,\epsilon)_{H}=\frac{1}{2\pi}\int_{0}^{2\pi}|\widehat{p_{\epsilon}}(\lambda)|^{2}+\omega^{2}(\lambda)|\widehat{q_{\epsilon}}(\lambda)|^{2}d\lambda,\label{scalProddelta}
\end{equation}
where
\[
q_{\epsilon}=\cos(\omega(\lambda)t)\widehat{q(0)}(\lambda)+\frac{\sin(\omega(\lambda)t)}{\omega(\lambda)}\widehat{p(0)}(\lambda),
\]
\[
p_{\epsilon}=-\omega(\lambda)\sin(\omega(\lambda)t)\widehat{q(0)}(\lambda)+\cos(\omega(\lambda)t)\widehat{p(0)}(\lambda),
\]
what follows from (\ref{e_At_q_p}). Hence: 
\[
|\widehat{p_{\epsilon}}(\lambda)|^{2}+\omega^{2}(\lambda)|\widehat{q_{\epsilon}}(\lambda)|^{2}=(-\omega(\lambda)\sin(\omega(\lambda)t)\widehat{q(0)}(\lambda)+
\]
\[
+\cos(\omega(\lambda)t)\widehat{p(0)}(\lambda))(-\omega(\lambda)\sin(\omega(\lambda)t)\overline{\widehat{q(0)}}(\lambda)+\cos(\omega(\lambda)t)\overline{\widehat{p(0)}}(\lambda))+
\]
\[
+\omega^{2}(\lambda)(\cos(\omega(\lambda)t)\widehat{q(0)}(\lambda)+\frac{\sin(\omega(\lambda)t)}{\omega(\lambda)}\widehat{p(0)}(\lambda))(\cos(\omega(\lambda)t)\overline{\widehat{q(0)}}(\lambda)+
\]
\[
+\frac{\sin(\omega(\lambda)t)}{\omega(\lambda)}\overline{\widehat{p(0)}}(\lambda))=|\widehat{p(0)}(\lambda)|^{2}+\omega^{2}(\lambda)|\widehat{q(0)}(\lambda)|^{2},
\]
which in view of (\ref{scalProddelta}) leads to
\[
(\epsilon,\epsilon)_{H}=\frac{1}{2\pi}\int_{0}^{2\pi}|\widehat{p(0)}(\lambda)|^{2 }+\omega^{2}(\lambda)|\widehat{q(0)}(\lambda)|^{2}d\lambda=2H(\psi(0)),
\]
whence, in view of the non-randomness of the last expression:
\[
E(\epsilon,\epsilon)_{H}=2H(\psi(0)).
\]

Now we find the average energy of the limiting distribution:
\[
\eta(t)=(q^{\infty}(t),\,p^{\infty}(t)).
\]
Since this vector is a stationary solution of the system under study
(and the only one with such initial conditions) and the Fourier transform
is a linear operator then the Fourier transform of this vector
will go into the stationary solution of the resulting equation:

\begin{equation}
\ddot{Q}_{\lambda}(t)=-\omega^{2}Q_{\lambda}(t)+f_{t}e^{in\lambda},\qquad\omega=\omega(\lambda),\label{eq_Furie}
\end{equation}
\begin{equation}
Q_{\lambda}(0)=\widehat{q^{\infty}(0)}(\lambda),\;\dot{Q}_{\lambda}(0)=\widehat{p^{\infty}(0)}(\lambda).\label{cond_Furie}
\end{equation}
Functions given by formulas:

\[
\widehat{q^{\infty}(t)}(\lambda)=\int_{\mathbb{R}}\frac{e^{itx}e^{in\lambda}}{\omega^{2} (\lambda)-x^{2}}Z(dx),
\]
\[
\widehat{p^{\infty}(t)}(\lambda)=\frac{d}{dt}\widehat{q^{\infty}(t)}(\lambda)=\int_{\mathbb{ R}}\frac{ixe^{itx}e^{in\lambda}}{\omega^{2}(\lambda)-x^{2}}Z(dx),
\]
give stationary solution of the equation with the corresponding initial
conditions. Indeed, direct verification shows that this is the solution of
equations (\ref{eq_Furie}). It remains to prove that the initial
conditions also correspond to it. To do this, we find the Fourier transform $\eta(0)$:
\[
\eta(0)=-\int_{\mathbb{R}}R_{A}(ix)gZ(dx),
\]
\[
\widehat{\eta(0)}(\lambda)=-\int_{\mathbb{R}}\widehat{R_{A}(ix)g}Z(dx),
\]
where the series summation and integration are interchanged
which is possible due to the convergence of the integral $\int_{\mathbb{R}}\left|R_{A}(ix)g\right|^{2}\mu(dx).$

Let us find $\widehat{R_{A}(ix)g}$. From (\ref{R_A_ix_g}) we have:
\[
\widehat{R_{A}(ix)g}=\widehat{\left(\begin{array}{c}
-R_{V}(x^{2})e_{n}\\
-ixR_{V}(x^{2})e_{n}
\end{array}\right)}.
\]
Because
\[
\widehat{R_{V}(x^{2})e_{n}}=\frac{e^{in\lambda}}{\omega^{2}(\lambda)-x^{2}},
\]
(see \cite{BogachevSmolyanov}, pp. $357-359$), we arrive at
\[
\widehat{R_{A}(ix)g}=-\frac{e^{in\lambda}}{\omega^{2}(\lambda)-x^{2}}\left(\begin{array}{c}
1\\
ix
\end{array}\right),
\]
i.e. to (\ref{cond_Furie}). Which is what was required.

Then: 
\[
EH(\eta(0))=\frac{1}{4\pi}\int_{\mathbb{R}}\int_{0}^{2\pi}\left(\frac{x^{2}}{(\omega^{2}(\lambda)-x^{2})^{2}}+\frac{\omega^{2}(\lambda)}{(\omega^{2}(\lambda)-x^{2})^{2}}\right)d\lambda\:\mu(dx)=
\]
\[
=\frac{1}{4\pi}\int_{\mathbb{R}}\int_{0}^{2\pi}\frac{\omega^{2}(\lambda)+x^{2}}{(\omega^{2}(\lambda)-x^{2})^{2}}d\lambda\mu(dx)=\alpha.
\]
The theorem is proved completely.

\subsection{Acknowledgment}

We would like to thank professor Vadim Malyshev for stimulating discussions
and numerous remarks.

\end{document}